\documentclass[12pt,fleqn,reqno]{amsart} 

\usepackage{amsmath,euscript}
\usepackage{amsthm}
\usepackage{amssymb}
\usepackage{graphicx}
\usepackage{mathrsfs}
\usepackage{epsf}
\usepackage{psfrag}
\usepackage{enumerate}
\usepackage{amsfonts,amssymb,amsthm,amsmath,latexsym,marginnote}
\usepackage{hyperref}

\usepackage[numbers]{natbib}
\usepackage{enumitem} 
\usepackage{xspace}
\usepackage[margin=1.37in]{geometry}

\newcommand{\comment}[1]{\vskip.3cm
\fbox{%
\parbox{0.93\linewidth}{\footnotesize #1}}
\vskip.3cm}


\usepackage{soul}






\newtheorem{thm}{Theorem}

\newtheorem{lemma}[thm]{Lemma}
\newtheorem{prop}[thm]{Proposition}
\newtheorem{cor}[thm]{Corollary}

\theoremstyle{remark}

\theoremstyle{definition}

\numberwithin{thm}{section}
\numberwithin{equation}{section}





\vsize  =23.5 true cm
\hsize  =15.4 true cm
\hoffset= 0 true cm
\voffset= 0 true cm 
\parindent=0pt
\baselineskip=16pt

\usepackage{soul}
\setstcolor{red}


\newcommand{\nc}{\newcommand}

\nc{\la}{\label}
\nc{\ba}{\begin{array}}
\nc{\ea}{\end{array}}
\nc{\bs}{\begin{split}}
\nc{\es}{\end{split}}

\newcommand{\R}{\mathbb{R}}
\newcommand{\C}{\mathbb{C}}

\newcommand{\cC}{\mathcal{C}}

\newcommand{\cF}{\mathcal{F}}

 

%

\nc{\al}{\alpha}
\nc{\del}{\delta}
\nc{\h}{\delta}
\nc{\G}{\Gamma}
\nc{\et}{\eta} 
\nc{\g}{\gamma}
\nc{\gam}{\gamma}
\nc{\ka}{\kappa}
\nc{\lam}{\lambda}
\nc{\Lam}{\Lambda}
\nc{\Om}{\Omega}
\nc{\om}{\omega}

\nc{\ta}{\tau}
\nc{\w}{\omega}
\nc{\io}{\iota}
\nc{\z}{\zeta}
\nc{\s}{\sigma}
\nc{\Si}{\Sigma}
\nc{\vphi}{\varphi}

\nc{\e}{\epsilon}

\nc{\bP}{\bar{P}}
\nc{\bQ}{\bar{Q}}


\nc{\ran}{\rangle}
\nc{\lan}{\langle}

\newcommand{\ls}{\lesssim}
\newcommand{\gs}{\gtrsim}


\newcommand{\Ran}{\operatorname{Ran}}

\newcommand{\supp}{\operatorname{supp}}

\newcommand{\re}{\operatorname{Re}}
\newcommand{\im}{{\rm Im}}

\nc{\bfone}{{\bf 1}}
\nc{\1}{{\bf 1}}


\newcommand{\p}{\partial}

\newcommand{\n}{\nabla}

\newcommand{\DETAILS}[1]{}


\newcommand{\x}{\lan x\ran}

\nc{\den}{\text{den}}
\nc{\ex}{\text{xc}}
\nc{\Ex}{\text{Xc}}

\nc{\jx}{\langle x \rangle}

\begin{document}

\title[Maximal Velocity of Propagation]{Maximal Speed of Quantum Propagation}


\author{J. Arbunich}
\address{Jack Arbunich,  Department of Mathematics, University of Toronto, 40 St. George street, Toronto, 
  M5S 2E4, Ontario, Canada}
\email{jack.arbunich@utoronto.ca}

\author{F. Pusateri}
\address{Fabio Pusateri,  Department of Mathematics, University of Toronto, 40 St. George street, Toronto, 
  M5S 2E4, Ontario, Canada}
\email{fabiop@math.toronto.edu}

\author{I. M. Sigal} 
\address{Israel Michael Sigal, Department of Mathematics, University of Toronto, 40 St. George street, Toronto,   M5S 2E4, Ontario, Canada}
\email{im.sigal@utoronto.ca}

\author{A. Soffer}
\address{Avy Soffer, Mathematics Department, Rutgers University, New Brunswick, NJ 08903, USA}
\email{soffer@math.rutgers.edu}


\date{March 3, 2021}

\begin{abstract}  
For Schr\"odinger equations with both time-independent and time-dependent Kato potentials, 
we give a simple proof of the maximal speed bound. The latter  states that the probability 
to find the quantum system outside a ball of
radius proportional to the time lapsed decays as an inverse power of time.  
We give an explicit expression for the constant of proportionality 
in terms of the maximal energy available to the initial condition. 
For the time-independent part of the interaction, 
we require neither decay at infinity nor smoothness.

Key words: Schr\"odinger equation, quantum dynamics, propagation speed, 
light cone, propagation estimates, quantum information, quantum scattering.\end{abstract}

\maketitle


\bigskip
\section{Introduction} 
It was shown in \cite{SigSof} that there is a constant $c$ such that  the probability to find a quantum mechanical system outside the ball of radius $c t$ 
  decays as an inverse power of time $t$. 
We call this result the {\it maximal propagation speed bound}.
It  complements the minimal propagation speed estimates obtained in \cite{SigSof, Skib, HunSigSof}; 
 see also \cite{Enss}, \cite{HuangSof} for related results, and \cite{HunSig2} for a review of the latter. 
 
In this paper, we give a simple proof of the maximal propagation speed (MPS)
bound  for Schr\"odinger (or von Neumann) equations with time-independent and time-dependent potentials.
We also give an upper bound on the infimum of such constants $c$ as above, 
which we call the {\it maximal propagation speed},  in terms of the maximal energy of the state involved. 
Our assumption on the time-independent part of the interaction is rather general
and allows e.g. many-body potentials with Coulomb singularities. 
In particular, neither decay at infinity nor smoothness is required.

 There is a certain parallel between the MPS bound and the Lieb-Robinson one, which has found many applications and has been discussed extensively in the literature, see \cite{NachSim, GebNachReSims, MatKoNaka, ElMaNayakYao} for a review 
  and some recent papers. 
  Both deal with the maximal propagation speed of information 
   in quantum systems over large distances. 
 The seeming discrepancy in that  the Lieb-Robinson  bounds do not depend on the energy is due to the fact that, with exception of \cite{GebNachReSims}, these bounds are obtained for discrete Hamiltonians for which the energy is bounded to begin with. (\cite{GebNachReSims} uses an explicit momentum cut-off in the solution.)
  
The paper is organized as follows. In Section \ref{sec:probl-res}
we formulate the problem precisely and state our first main result dealing with time-independent potentials. In Section \ref{sec:time-dep-pot} we  state the corresponding result for time-dependent potentials. In Sections \ref{sec:max-vel1-pf} and \ref{sec:max-vel-x-Ht-pf}, we prove the corresponding results and in Appendix \ref{sec:commut} we collect commutator estimates (non-Abelian functional calculus) 
based on those described in \cite{HunSig2}. 

In what follows, the relation $A\ls B$ means that there is a 
 constant $C>0$  independent of the parameters $s, t, a, b, c$ appearing below and s.t. $A\le C B$.

\section{The problem and results for time-independent potentials}\label{sec:probl-res}
We consider the Schr\"odinger equation $i \p_t \psi_t = H \psi_t$, 
with a Schr\"odinger operator $H=-\frac{1}{2}\Delta+V(x)$ on $L^2(\R^d)$ and an initial condition $\psi$. 
We will assume that $V(x)$ is $\Delta$-bounded with the relative bound $<1$, i.e. it satisfies 
 (with $\|\cdot\|$ being the norm in $L^2(\R^d)$)
\begin{align} \label{V-cond}
& \exists \,0\le a<1,\ b>0: \quad
\|Vu\|\le a \|\frac12\Delta u\|+ b\|u\|. 
\end{align}
Then by Kato's result (see e.g. \cite{CFKS}),  $H$ is self-adjoint on the domain of $\Delta$.

We say that a  state $\psi$ (or  $\psi_t=e^{-iHt}\psi$) obeys the {\it maximal propagation speed (MPS) bound} 
if there is a constant $c<\infty$ s.t.,  as $|t|\to\infty$, 
\begin{align} \label{max-speed-def}
\int_{|x|\ge c|t|}dx\,|\psi_t(x)|^2\to 0. 
\end{align}
Moreover, the scalar  $c_{\rm max}:=\inf \{ c:$ \eqref{max-speed-def} holds$\}$ will be called  the {\it maximal propagation speed}. 

Let $\chi_{A}$ stand for the characteristic function of a set $A$. In this paper, we show that any state 
 $\psi\in \Ran \chi_I(H)$ obeys the MPS bound and the maximal propagation speed $c_{\rm max}$ is bounded above by the constant 
  \begin{align} \label{k} k:=\||p|\chi_I(H)\|, 
\end{align} 
where $p:=-i\n$ and $I$ is an interval in $\R$.
 Note that, if  $I:=(-\infty, E]$ for some energy $E>0$, then due to \eqref{V-cond}, $k^2\le 2(E+b)/(1-a)$. 
 We have 

\bigskip

\begin{thm}\label{thm:max-vel-x}[Maximal propagation speed bound] 
Suppose that $H$ satisfies \eqref{V-cond}.
Let 
 $g\in C_0^\infty(I)$ be real and let  $A_\rho^\pm:=\{x\in \R^d: \pm |x|\ge \pm\rho\}$. 
 If $c> k$, then
\begin{align} \label{max-vel-est}
&\|\chi_{A_{ct+a}^+}\,e^{-iHt}g(H)\chi_{A_{b}^-}\|\ls \,t^{-n}, 
\end{align}
 for $t>1$ and any $n$, uniformly in $a$ in any region $0<b<a\ls \,t$.
\end{thm}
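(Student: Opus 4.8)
The plan is to prove the estimate by a positive commutator / propagation-observable argument applied to the phase-space function measuring ``position minus light-cone radius''. Concretely, I would introduce a family of smooth, time-dependent observables $\Phi_t = f\big(\tfrac{|x| - ct - a}{R(t)}\big)$ (or, to avoid the singularity of $|x|$ at the origin, a smoothed version $\langle x\rangle$), where $f$ is a nonnegative smooth switch vanishing for argument $\le -1$ and equal to $1$ for argument $\ge 0$, and $R(t)$ is a slowly growing length scale (e.g. $R(t)=\epsilon t$ with $\epsilon$ small, or a fixed power $t^\delta$), chosen to interpolate between the region $A_b^-$ and the region $A_{ct+a}^+$. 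The quantity to control is $\langle \psi_t, \Phi_t \psi_t\rangle$ with $\psi_t = e^{-iHt} g(H)\psi$, $\psi$ supported in $A_b^-$, so that initially this pairing is (nearly) zero and we want it to stay $O(t^{-n})$.

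The heart of the matter is the Heisenberg derivative $D\Phi_t := \partial_t \Phi_t + i[H,\Phi_t]$. The time derivative $\partial_t\Phi_t$ contributes $-\tfrac{c}{R(t)} f'(\cdots)$ plus lower-order terms from $\dot R$, and is negative (this is the ``good'' sign coming from moving the cone outward at speed $c$). The commutator $i[H,\Phi_t] = i[-\tfrac12\Delta, \Phi_t]$ — the potential $V$ drops out since $\Phi_t$ is a function of $x$ alone — is, to leading order, $\tfrac{1}{R(t)} \,\mathrm{Re}\big( \widehat{x}\cdot p \, f'(\cdots)\big)$ where $\widehat x = x/|x|$, plus an $O(R(t)^{-2})$ remainder from the second-derivative terms. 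Cutting down by $g(H)$ we may insert $|p| \le k + \text{(small)}$ on the relevant energy window, so the commutator term is bounded by $\tfrac{k}{R(t)} f'(\cdots)$ up to errors, whereas the drift term gives $-\tfrac{c}{R(t)} f'(\cdots)$. Since $c > k$, the sum $D\Phi_t \le -\tfrac{c-k}{R(t)} f'(\cdots) + (\text{remainders})$ has a sign suitable for a differential inequality — the positive-commutator mechanism. Integrating $\tfrac{d}{dt}\langle\psi_t,\Phi_t\psi_t\rangle = \langle\psi_t, D\Phi_t\,\psi_t\rangle$ in $t$ and discarding the negative main term yields that $\langle\psi_t,\Phi_t\psi_t\rangle$ is bounded by the initial value plus the time-integral of the remainders.

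To upgrade this to the desired inverse-power-of-time decay I would run the argument iteratively (a bootstrap). A first pass, using only the crude bound $\Phi_t\le 1$ and the leading sign, shows the pairing stays small; then one feeds this back, together with a version of the estimate for the derivative $f'$-localized quantity $\int_1^t \langle\psi_s, \tfrac{1}{R(s)} f'(\cdots)\psi_s\rangle\,ds \lesssim 1$ (the ``maximal-function'' byproduct of the positive commutator), to gain one power of $t$ at a time; each iteration exploits that the error terms are supported where $f'\ne 0$, i.e. in a shrinking-relative-width shell, and are down by a factor $R(t)^{-1}$ or by the already-established smallness. The non-Abelian functional calculus / commutator expansions collected in the appendix are exactly what is needed to justify the leading-symbol computation of $i[-\tfrac12\Delta,\Phi_t]$ and to bound the remainder commutators $[|p|^2, f^{(j)}]$, and to handle the smoothing of $|x|$ near the origin.

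The main obstacle I anticipate is \emph{not} the sign of the commutator (that is dictated by $c>k$) but the bookkeeping of the remainder terms: controlling the $O(R(t)^{-2})$ pieces of $i[-\tfrac12\Delta,\Phi_t]$, the $\dot R/R$ pieces of $\partial_t\Phi_t$, and above all the commutators needed to move $g(H)$ past $\Phi_t$ and replace $H$-localization by $p$-localization — these produce error operators localized in the transition shell, and the iteration must be organized so that at each stage these errors are either already $O(t^{-n})$ by the previous step or carry an extra decaying factor. Choosing $R(t)$ (and the rate of the iteration) correctly so that all remainders close the bootstrap is the delicate point; everything else is a routine, if lengthy, application of Cauchy–Schwarz and the commutator estimates of Appendix \ref{sec:commut}.
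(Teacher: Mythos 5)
Your proposal is correct and follows essentially the same route as the paper: a propagation observable $f$ of the rescaled variable $(\langle x\rangle - ct - a)/s$, a Heisenberg-derivative computation in which the drift contributes $-c\,s^{-1}f'$ and the symmetrized commutator term is bounded by $k\,s^{-1}f'$ via Cauchy--Schwarz and $\||p|\,g(H)\|\le k$, followed by an iteration in which the $O(s^{-2})$ remainder (supported where $f'\neq 0$) is re-fed into the propagation estimate to gain one power of $s$ per step. The paper implements your ``$R(t)$'' as a fixed auxiliary scale $s\ge t$ set equal to $t$ only at the end, and handles the $g(H)$-commutators exactly by the appendix expansions you invoke.
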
 

\bigskip
{\it Earlier results.} 
 MPS bounds were first given in 
  \cite{SigSof} and then extended in   \cite{Skib} and \cite{BonyFaupSig} 
and used in scattering theory in \cite{SigSof2, Sig, HeSk, DerGer1, FaupSig}. 
 
 The main   contributions of this paper compared to earlier work are:

(a) 
 a significant extension of the class of potentials (in particular, no smoothness, or decay at infinity is required); 

(b) a much simpler proof; 

(c) the determination of the precise maximal velocity bound (see \eqref{k}).
 
\bigskip
{\it Light cone.} 
Inequality \eqref{max-vel-est} (see also \eqref{max-vel-est-weight}) implies that the probability to find the quantum system outside the ball $\x\le ct+a$ decays with time. Moreover, for well localized initial conditions, it gives a rate of decay.

Below, we prove a stronger result implying a limitation on the speed of propagation of information.  
Let $\al_t$ be the Heisenberg evolution, $\al_t(B):= e^{iHt}B e^{-iHt}$. Then we show for $c> k, \rho$ satisfying $\rho> a+ct$, for $a$  sufficiently large,
\begin{align} \label{max-vel-est-info}
&\|\chi_{A_{\rho}^+}\,\al_t(\chi_{A_{b}^-}g(H))\|\ls \,\rho^{-n}, 
\end{align}
i.e. it takes at least $\rho/c$ units of time for a signal originating in $A_{b}^-$ to reach $A_{\rho}^+$. 

\bigskip
{\it Density matrices.} 
Since Theorem \ref{thm:max-vel-x} and its version in \eqref{max-vel-est-info} 
deal with propagation of observables, they are valid also for states described by density operators, 
i.e. for the von Neumann dynamics.

\bigskip
{\it Weighted estimates.} 
Inequality 
\eqref{max-vel-est} implies weighted estimates.  
Indeed, using 
\[\lan x\ran^{-\al} = \lan x\ran^{-\al}\chi_{A_{\e t/2}^-} +O(t^{-\al}),\] 
and using \eqref{max-vel-est}, with $b=\e t/2$, we find, for $n\ge \al$, 

 
 
\begin{align} \label{max-vel-est-weight}
\|\chi_{A_{(c+\e)t}^+}\,e^{-iHt}g(H)\lan x\ran^{-\al}\|\ls \,t^{-\al}. \end{align}

\bigskip
{\it Abstract formulation.} Theorem \ref{thm:max-vel-x} could be stated in an abstract
setting for a pair of self-adjoint operators $H$ and $X$ with multiple commutators required to be  $H$-bounded, with some extra care in the definition of
the commutators for unbounded operators. Here we deal with
Schr\"odinger operators $H=\frac{1}{2}p^2+V(x)$ and with the
particle position coordinate $x$, more precisely, with $\lan x\ran$.

\bigskip
{\it Time reversal.} The results for $t<0$ are 
obtained  by complex conjugation 
$\cC:\,\psi\to\overline\psi$, using
$\cC H\cC=H,\ \cC x\cC=x$: 
  \eqref{max-vel-est} transform into
 \begin{align}  \label{time-rev2} 
\|\chi_{A_{ct+a}^+}\,e^{iHt}g(H)\chi_{A_{b}^-}\|\ls \,t^{-n}\end{align}
for $t>1$ and for any $n$, which can be reformulated in terms of  $t<-1$.

 \bigskip
 
{\it Approach.}   Our approach can be thought of as microlocal analysis 
 with an operator functional calculus replacing the  pseudodifferential one. 
It is based on the method of propagation observables which we now explain.

We consider a time-dependent, non-negative 
operator-family ({\it propagation observable}) $\Phi_t$. We would like to obtain {\it propagation estimates} 
of the form $\|\Phi_t\psi_t\|\ls t^{-m}$. 
Denote the inner product in $L^2(\R^d)$ by $\lan \cdot, \cdot\ran$, so that $\|\cdot\|=\sqrt{\lan \cdot, \cdot\ran}$ and let $\psi_t=e^{-iHt}g(H)\phi$ be a spectrally localized solution to 
the Schr\"odinger equation and let $\lan A\ran_t :=\lan\psi_t, A\psi_t\ran$. Note the relation  
\begin{align}
\label{dt-Heis}
&{d\over{dt}}\left<\Phi_t\right>_t =\lan D\Phi_t\ran_t; \qquad D\Phi_t=i[H,\Phi_t]+{\partial\over{\partial t}}\Phi_t.
\end{align}
We call $D$ the {\it Heisenberg derivative}.  
Using $\lan \Phi_t\ran_t= \lan \Phi_0\ran_0+\int_0^t \p_r\left<\Phi_r\right>_r dr$, we find 
\begin{align} \label{eq-basic}  
\lan \Phi_t\ran_t-\int_0^t \lan D\Phi_r\ran_r dr= \lan \Phi_0\ran_0,
\end{align}
which we call the {\it basic equality}.
If $g(H) D\Phi_t g(H) \le 0$, modulo fast time-decaying terms, 
then, after pulling $g(H)$ out of $\psi_t$, 
this relation gives estimates on the positive terms $ \lan \Phi_t\ran_t$ and $-\int_0^t \lan D\Phi_r\ran_r dr$.

We will consider propagation observables of the form 
$\Phi_{ts}= f(x_{ts})$,  where $x_{ts}:=s^{-1}(\x -a-c t)$, with $s\ge t$, and 
 $f$ is a non-negative, non-decreasing function supported in $(0, \infty)$.
The factor $s^{-1}$ is introduced to control multiple commutators and commutator products. 
It can be thought of as an adiabatic or semi-classical parameter. 

The energy cut-off $g(H)$ appearing next to $D\Phi_t$ is  pulled out of $\psi_0$, 
by commuting it with the evolution $e^{-iHt}$. 
The latter simple but crucial fact expresses the conservation of energy for the Schr\"odinger equation. 
It is not valid for Schr\"odinger equations with time-dependent Schr\"odinger operators, $H(t)$. 
This is the main obstacle in treating  such equations. 
We overcome it by introducing the {\it asymptotic energy cut-offs} 
and proving the {\it pull-through relation}, replacing the commutativity of $g(H)$ and $e^{-iHt}$. 
In what follows, we let $g\in C_0^\infty(I)$.

\bigskip
\section{Time-dependent potentials}\label{sec:time-dep-pot}
Consider time-dependent hamiltonians of the form 
\begin{align}\label{Ht}
H_t:=H+W_t,
\end{align}
where $H$ is as above and $W=W(x, t)$ is a real, time-dependent potential satisfying 
\begin{align}\label{Wt-cond}
\p^\al_x W_t(x)=O(\langle t \rangle^{-\mu-|\al |}),\ \quad \mu>1,\ |\al|\le 2.
\end{align}
Here $\langle t \rangle:=(1+t^2)^{1/2}$, but in what follows we consider only $t>0$ 
and ignoring the singularity at $t=0$, we write $t$ for $\langle t \rangle$.

Let $U_t := U(t, 0)$ be the evolution generated by $H_t$. 
Here the crucial role is played by the asymptotic energy cut-offs introduced in \cite{SigSof}:

\begin{prop}\label{prop:as-en-cutoff}[Asymptotic energy cut-off]   Let condition \eqref{Wt-cond} hold but with $\mu>0$ instead of $\mu>1$.
Then the following operator-norm limit exists
\begin{align} \label{g+}
g_+(H) := \lim_{t\rightarrow \infty}  U_t^{-1}g(H)U_t.
\end{align}
 \end{prop}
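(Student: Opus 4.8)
The plan is a Cook/Duhamel argument: I will show that the operator-valued function $F(t):=U_t^{-1}g(H)U_t$ is Cauchy in operator norm as $t\to\infty$, whence the limit $g_+(H):=\lim_{t\to\infty}F(t)$ exists, since the bounded operators on $L^2(\R^d)$ form a complete space. By \eqref{Wt-cond} the potential $W_t$ is bounded and strongly continuous for $t\ge 0$, so $U_t=U(t,0)$ is a well-defined unitary propagator and $g(H)$ maps $L^2(\R^d)$ into $D(H)=D(H_t)$. A standard Duhamel computation (justified, if one wishes, in the weak sense) uses $i\p_t U_t=H_tU_t$, hence $\p_t U_t^{-1}=iU_t^{-1}H_t$, to give
\[
\frac{d}{dt}F(t)=iU_t^{-1}\big(H_t\,g(H)-g(H)\,H_t\big)U_t=iU_t^{-1}[H_t,g(H)]U_t .
\]
The key simplification — the substitute, in the time-dependent setting, for the conservation-of-energy identity $[e^{-iHt},g(H)]=0$ available when $W=0$ — is that $[H,g(H)]=0$ by the functional calculus, so $[H_t,g(H)]=[W_t,g(H)]$, a bounded operator. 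As $U_t$ is unitary, this gives the pointwise bound $\|F'(t)\|\le\|[W_t,g(H)]\|$, and the whole statement reduces to showing that $\|[W_t,g(H)]\|$ is integrable near $t=+\infty$.

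The main step is therefore the commutator estimate $\|[W_t,g(H)]\|\ls t^{-1-\mu}$. I would obtain it from the non-Abelian functional calculus of Appendix \ref{sec:commut} — equivalently, from a Helffer--Sj\"ostrand almost-analytic representation $g(H)=-\tfrac1\pi\int_{\C}\bar\partial\tilde g(z)\,(z-H)^{-1}\,dL(z)$, with $\tilde g$ a compactly supported almost-analytic extension of $g$ — which expresses $[W_t,g(H)]$ as $\tfrac1\pi\int_{\C}\bar\partial\tilde g(z)\,(z-H)^{-1}[W_t,H](z-H)^{-1}\,dL(z)$. Here $[W_t,H]=[W_t,-\tfrac12\Delta]=\tfrac12(\Delta W_t)+(\nabla W_t)\cdot\nabla$ is first order with coefficients of size $O(t^{-1-\mu})$ and $O(t^{-2-\mu})$ by \eqref{Wt-cond}; the operator $\nabla(z-H)^{-1}$ is bounded with norm polynomially controlled by $\langle z\rangle$ and $|\im z|^{-1}$, which follows from \eqref{V-cond} (it yields $\|p^2(z-H)^{-1}\|\ls\langle z\rangle\,|\im z|^{-1}$ for $z$ off the spectrum); and $\bar\partial\tilde g$ is compactly supported with $|\bar\partial\tilde g(z)|=O(|\im z|^N)$ for every $N$. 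Collecting these facts, the $z$-integral converges and produces a constant depending only on $g$, so $\|[W_t,g(H)]\|\ls\|\nabla W_t\|_\infty+\|\Delta W_t\|_\infty\ls t^{-1-\mu}$.

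Finally, since $\mu>0$ the bound $\|F'(t)\|\ls t^{-1-\mu}$ is integrable on $[1,\infty)$, so for $t''>t\ge 1$ we get $\|F(t'')-F(t)\|\le\int_t^{t''}\|F'(r)\|\,dr\ls t^{-\mu}\to 0$; hence $F(t)$ converges in operator norm and $g_+(H)$ exists. I expect the only genuine work to lie in the commutator estimate of the second paragraph — specifically in establishing the resolvent bounds for $H$ and the convergence of the almost-analytic integral under the weak Kato hypothesis \eqref{V-cond}; everything else is the routine Cook argument.
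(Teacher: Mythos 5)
Your proposal is correct and follows essentially the same route as the paper: both write $U_t^{-1}g(H)U_t$ as $g(H)$ plus the integral of its derivative $iU_r^{-1}[g(H),W_r]U_r$, and both reduce the convergence to the commutator bound $\|[W_r,g(H)]\|\ls r^{-\mu-1}$, proved via the Helffer--Sj\"ostrand representation with $[W_r,H]=\nabla W_r\cdot\nabla+\tfrac12\Delta W_r$ controlled by a resolvent bound on $p(z-H)^{-1}$ (this is exactly Lemma \ref{lem:g-W-comm} of the appendix). No gaps.
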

 
\begin{proof}  
Define $g_t(H):=U_t^{-1}g(H)U_t$, 
and write $g_t(H)$ as the integral of the derivative and use that $\p_r g_r(H) = iU_r^{-1}[g(H), W_r]U_r$ to obtain
\begin{align} \label{gt}
g_t(H)=g(H)+i\int_0^t U_r^{-1}[g(H), W_r]U_rdr.
\end{align}
Since $[g(H), W_r] =O(r^{-\mu-1})$ (see Lemma \ref{lem:g-W-comm} of Appendix \ref{sec:commut}), 
this shows that $g_+(H)=\lim g_t(H)$ (the operator norm limit) exists provided $\mu>0$.  \end{proof}

{\bf Remark.} As was noticed by one of the referees, (a) assumption \eqref{Wt-cond} 
 on the time-dependent potential $W_t$ can be relaxed, e.g. to 
 $\int_0^\infty  \|\p_x^\al W_t(x)\| dt <\infty, |\al|\le 2$; (b) for $\mu > 1$ 
 (which is used in our proofs below)
 the proof of Proposition \ref{prop:as-en-cutoff} is simpler and the obvious estimate 
 $\|[g(H), W_r]\|\le 2\|g(H)\| \|W_r\| =O(r^{-\mu})$ 
 is sufficient. So, we could have assumed e.g. $\|W_t (H+1)^{-1}\|_{L^\infty}\ls  \lan t\ran^{-\mu},\ \mu>1$. 

Eq. \eqref{gt} implies that 
\begin{align}\label{g+gt} 
g_+(H)-g_t(H)= i\int_t^\infty U_r^{-1}[g(H), W_r]U_rdr=O(t^{-\mu}),
\end{align}
which, together with $U_t g_t(H)=U_tU_t^{-1}g(H)U_t=g(H)U_t$,  implies

\begin{cor}\label{prop:pull-thru}[Pull-through relation]
With the notation and assumptions above, see \eqref{Ht}, \eqref{Wt-cond} and \eqref{g+}, we have
\begin{align} \label{PTR} 
U_t g_+(H)= g(H)U_t+O(t^{-\mu}).
\end{align}
\end{cor}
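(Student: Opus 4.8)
The plan is to obtain the corollary directly from equation \eqref{g+gt}, which already carries all of the analytic content. First I would recall the two facts in hand: the operator-norm estimate
\[
g_+(H)-g_t(H)= i\int_t^\infty U_r^{-1}[g(H), W_r]U_r\,dr = O(t^{-\mu}),
\]
which follows from \eqref{gt} together with the commutator bound $[g(H),W_r]=O(r^{-\mu-1})$ of Lemma \ref{lem:g-W-comm}; and the elementary identity $U_t g_t(H) = U_t U_t^{-1} g(H) U_t = g(H) U_t$, which is just the definition of $g_t(H)$ in the proof of Proposition \ref{prop:as-en-cutoff}.

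Next I would left-multiply the displayed estimate by $U_t$. Since $W_t$ is a bounded real potential (by the $|\alpha|=0$ case of \eqref{Wt-cond}) and $H$ is self-adjoint on $D(\Delta)$ under \eqref{V-cond} by Kato's theorem, the operator $H_t=H+W_t$ is self-adjoint on the same domain, so the propagator $U_t$ is unitary; hence $\|U_t B\| = \|B\|$ for every bounded $B$, and left multiplication by $U_t$ preserves the $O(t^{-\mu})$ remainder. This yields
\[
U_t g_+(H) = U_t g_t(H) + O(t^{-\mu}) = g(H) U_t + O(t^{-\mu}),
\]
which is precisely \eqref{PTR}.

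The only point deserving a word of care is the interchange of the operator-norm limit defining $g_+(H)$ in Proposition \ref{prop:as-en-cutoff} with left multiplication by the bounded operator $U_t$, but this is automatic since operator-norm convergence is preserved under multiplication by a fixed bounded operator. Thus there is no genuine obstacle in this corollary: all the real work, namely establishing the convergence of $g_t(H)$ and its rate through the commutator estimate, has already been carried out in Proposition \ref{prop:as-en-cutoff} and \eqref{g+gt}, and the pull-through relation is a one-line consequence.
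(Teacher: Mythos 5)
Your proof is correct and follows essentially the same route as the paper: it combines the estimate \eqref{g+gt} (resting on Lemma \ref{lem:g-W-comm}) with the identity $U_t g_t(H)=g(H)U_t$ and left multiplication by the unitary $U_t$. The extra remarks about unitarity and norm-limit interchange are fine but add nothing beyond the paper's one-line argument.
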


\smallskip
Using the above key relation with $\mu>1$,
and additional commutator estimates, we are able to 
extend Theorem \ref{thm:max-vel-x} to time-dependent potentials:  

\smallskip
\begin{thm}\label{thm:max-vel-x-Ht}[Maximal propagation speed for $t$-dependent potentials] 
Suppose that $H_t$ is of the form \eqref{Ht}, with $H$ and $W_t$ satisfying \eqref{V-cond} and \eqref{Wt-cond}.
Let 
 $g\in C_0^\infty(I)$ be real. 
 If $c> k$, then  
\begin{align} \label{max-vel-est-Ht}
\|\chi_{A_{ct+a}^+}\,U_t g_+(H)\chi_{A_{b}^-}\|\ls \,t^{-1/2} 
\end{align}
for $t>1$, uniformly in $a$ in any region $0<b<a\ls \,t$. Here $A_{\rho}^\pm$ are the sets defined in Theorem \ref{thm:max-vel-x}.\end{thm}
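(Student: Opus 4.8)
The plan is to mimic the method of propagation observables used for the time-independent case (Theorem \ref{thm:max-vel-x}), replacing the commutativity of $g(H)$ with the evolution by the pull-through relation \eqref{PTR}. First I would choose the propagation observable $\Phi_{ts} = f(x_{ts})$ with $x_{ts} := s^{-1}(\jx - a - ct)$, $s \ge t$, and $f \ge 0$ smooth, non-decreasing, supported in $(0,\infty)$, equal to $1$ near $+\infty$, and with $\sqrt{f}$ and $\sqrt{-f'}$ also smooth — exactly as in the $V$-only argument. The key algebraic input is the sign of the Heisenberg derivative: since $D_t \Phi_{ts} = i[H_t, \Phi_{ts}] + \partial_t \Phi_{ts}$, and on the support of $f'$ we have $\partial_t \jx$-drift $= -c/s$ while the commutator $i[H_t,\Phi_{ts}]$ contributes $i[\tfrac12 p^2, f(x_{ts})] + i[W_t, f(x_{ts})]$, I expect $i[\tfrac12 p^2, f(x_{ts})] \approx s^{-1} f'(x_{ts})^{1/2}\,\tfrac12(\hat x \cdot p + p\cdot \hat x)\, f'(x_{ts})^{1/2}$ modulo lower-order multiple-commutator terms controlled by the $s^{-1}$ factor (these are the commutator estimates collected in Appendix \ref{sec:commut}). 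Sandwiching with the asymptotic cut-off $g_+(H)$ and using that $g_+(H)$ is a norm-limit of $U_r^{-1}g(H)U_r$, the velocity observable $\tfrac12(\hat x\cdot p + p\cdot\hat x)$ is effectively cut to $|p| \le k + o(1) < c$, so $g_+(H)\, D_t\Phi_{ts}\, g_+(H) \le -\,\tfrac{c-k}{s}\, g_+(H) f'(x_{ts}) g_+(H) + (\text{integrable in } s)$, i.e. $D_t\Phi_{ts}$ is negative modulo good remainders. The contribution of $[W_t, f(x_{ts})]$ is $O(s^{-1} t^{-\mu})$ by \eqref{Wt-cond}, hence harmless for $\mu > 1$.

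Next I would run the basic equality \eqref{eq-basic} with $\psi_t = U_t g_+(H)\chi_{A_b^-}\phi$: since $D_t\Phi_{ts}$ is (modulo remainders) a non-positive operator after conjugation by $g_+(H)$, the terms $\lan\Phi_{ts}\ran_t$ and $-\int_0^t \lan D_r\Phi_{rs}\ran_r\,dr$ are both bounded by $\lan\Phi_{0s}\ran_0$ plus the accumulated remainders. The initial term $\lan\Phi_{0s}\ran_0 = \lan \phi, \chi_{A_b^-} g_+(H)^* f(s^{-1}(\jx - a)) g_+(H)\chi_{A_b^-}\phi\ran$ vanishes once $s^{-1}(b - a) < 0$ stays below the support of $f$, which holds uniformly for $b < a \lesssim t$ on the relevant range; more precisely one needs $a - b$ large enough that $f$ vanishes on $[\,(b-a)/s,\, 0]$, which is where the hypothesis "$a$ sufficiently large / $0 < b < a$" enters. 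Evaluating the resulting inequality at $s = t$ gives $\lan f(x_{tt})\ran_t = \|f(t^{-1}(\jx - a - ct))^{1/2} U_t g_+(H)\chi_{A_b^-}\phi\|^2 \lesssim (\text{remainder})$. Since $\chi_{A_{ct+a}^+} = \chi_{\{\jx \ge a + ct\}} \le f(x_{tt})$ when $f \equiv 1$ on $[1,\infty)$ (choosing the cutoff so that $x \ge a+ct$ implies $x_{tt} \ge $ that region), the left side dominates $\|\chi_{A_{ct+a}^+} U_t g_+(H)\chi_{A_b^-}\|^2$, and one obtains the desired bound.

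The main obstacle — and the reason the exponent is $t^{-1/2}$ rather than $t^{-n}$ — is the loss coming from the pull-through relation \eqref{PTR}: whenever $g_+(H)$ is traded for $g(H)$ against the free evolution one picks up an $O(t^{-\mu})$ (more precisely $O(r^{-\mu})$ inside time integrals) error, and these errors are not spectrally localized, so they cannot be iterated to gain arbitrarily many powers the way the clean commutator $[g(H), e^{-iHt}] = 0$ allowed in the time-independent case. Tracking how the $s^{-1}$ adiabatic factor interacts with the $r^{-\mu}$ pull-through error in $\int_0^t \lan D_r\Phi_{rs}\ran_r\,dr$ — in particular optimizing over $s \ge t$ — is what produces the $t^{-1/2}$; a careful bookkeeping of which remainder terms are genuinely integrable in $r$ versus which only give an inverse power, and making sure the $[W_t,\cdot]$ and multiple-commutator terms stay on the good side, is the delicate part. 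The secondary technical point is justifying the functional-calculus manipulations (Appendix \ref{sec:commut}): that $f(x_{ts})$, its square root, and its commutators with $p^2$ and $W_t$ are $H_t$-bounded with the stated decay, which requires the non-Abelian functional calculus rather than naive pseudodifferential calculus since $V$ is merely Kato.
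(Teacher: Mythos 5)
Your overall strategy matches the paper's: the same propagation observable $\Phi_{ts}=f(x_{ts})$, the same sign computation for the Heisenberg derivative, and the pull-through relation \eqref{PTR} as the substitute for the commutativity of $g(H)$ with the evolution. (One small simplification you missed: $[W_t, f(x_{ts})]=0$ exactly, since both are multiplication operators in $x$; no estimate on this commutator is needed. Also, the way the paper actually uses \eqref{PTR} is to write $\psi_t=g(H)U_t\phi_0+R$ with $R=O(t^{-\mu})$ and expand $\p_t\lan\Phi_{ts}\ran_t$ into four terms, so that the operator inequality \eqref{DPhi-est1} is applied between genuine spectral cutoffs $g(H)$; sandwiching directly with $g_+(H)$ and asserting that it ``cuts $|p|$ to $k+o(1)$'' is not justified, since $g_+(H)$ is not a spectral cutoff of anything.)

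The genuine gap is your treatment of the initial term $\lan f(x_{0s})\ran_0$. You claim it \emph{vanishes} because $f(x_{0s})$ and $\chi_{A_b^-}$ have disjoint supports. But $\psi_0=g_+(H)\chi_{A_b^-}\phi$, and the nonlocal operator $g_+(H)$ sits between the two cutoffs, so disjointness of supports gives nothing by itself. Even in the time-independent case this term is only $O(s^{-n})$, obtained from the commutator expansion \eqref{comm-exp-x} for $[f(x_{0s}),g(H)]$ -- and that route is unavailable here because $g_+(H)=\lim_t U_t^{-1}g(H)U_t$ has no Helffer--Sj\"ostrand representation and no useful commutator expansion with functions of $x$. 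The paper has to approximate $g_+(H)$ by $g_{s^\beta}(H)=U_{s^\beta}^{-1}g(H)U_{s^\beta}$ at the cost of $O(s^{-\beta\mu})$, prove a separate propagation-type bound $\chi(x_{0s})g_{s^\beta}(H)\chi_{A_b^-}=O(s^{\beta-1})$ (Lemma \ref{lem:chi-g+-est}, which requires controlling $[\chi,U_{s^\beta}]$ and $[S,U_r]$), and optimize $\beta=1/(\mu+1)$ to get $\lan f(x_{0s})\ran_0=O(s^{-2\mu/(\mu+1)})$. This is a substantive piece of the proof that your plan skips entirely. Relatedly, your account of where $t^{-1/2}$ comes from is off: there is no optimization over $s$; one simply cannot iterate the $Cs^{-2}\int_0^t\lan h\ran_r\,dr$ term as in Proposition \ref{prop:propag-est1} (the pull-through error blocks the bootstrap), so that term and the accumulated $O(s^{-1}r^{-\mu})$ errors each contribute $O(s^{-1})$; setting $s=t$ gives $\lan f(x_{tt})\ran_t\ls t^{-1}$, and the square root yields $t^{-1/2}$.
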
 

The next result from \cite{SigSof} shows that the initial conditions of the 
form $g_+(H)\chi_{A_{b}^-}\phi$, 
for various $g (H), b>0$ and $\phi\in L^2$,
form a dense set. 

\begin{thm}\label{thm:g+-approx} 
Let $(I_n) \subset \R$ be a sequence of increasing intervals with $\lim_{n \rightarrow \infty}I_n = \R$,
and let $g_n \in C_0^\infty(\R)$ be such that $g_n \equiv 1$ on $I_n$. 
Let $g_{n,+} := \lim U_t^{-1}g_n(H)U_t$, as in \eqref{g+}.
Then
\begin{align*}
s-\lim_{n \rightarrow \infty} g_{n,+}(H) = \mathrm{id}.
\end{align*}
\end{thm}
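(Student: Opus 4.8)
The plan is, for a fixed $\psi\in L^2(\R^d)$, to split
\[
g_{n,+}(H)\psi-\psi=\big(g_{n,+}(H)-g_n(H)\big)\psi+\big(g_n(H)-\mathrm{id}\big)\psi
\]
and to show that each summand tends to $0$ as $n\to\infty$. We may assume $\sup_n\|g_n\|_{L^\infty}<\infty$ (the natural normalization implicit in the statement; e.g. $0\le g_n\le1$). Then the second summand is handled purely by the functional calculus of $H$: since the intervals $I_n$ exhaust $\R$ and $g_n\equiv1$ on $I_n$, one has $g_n(\lambda)\to1$ for every $\lambda\in\R$, so $\|(g_n(H)-\mathrm{id})\psi\|\to0$ by dominated convergence with respect to the spectral measure of $H$. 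In other words, $s\text{-}\lim_n g_n(H)=\mathrm{id}$, which will also be used in the next step.

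For the first summand I would let $t\to\infty$ in \eqref{gt} (legitimate by Proposition \ref{prop:as-en-cutoff}), obtaining
\[
g_{n,+}(H)-g_n(H)=i\int_0^\infty U_r^{-1}[g_n(H),W_r]U_r\,dr,\qquad F_n(r):=U_r^{-1}[g_n(H),W_r]U_r\psi,
\]
and then apply dominated convergence to $\int_0^\infty\|F_n(r)\|\,dr$. For the majorant, working as in the rest of this section with $\mu>1$ in \eqref{Wt-cond}, the crude commutator bound of the Remark gives $\|[g_n(H),W_r]\|\le 2\|W_r\|_{L^\infty}\ls\langle r\rangle^{-\mu}$, uniformly in $n$ (using $0\le g_n\le1$), so $\|F_n(r)\|\ls\langle r\rangle^{-\mu}\|\psi\|\in L^1((0,\infty),dr)$; note this uses only the $|\al|=0$ part of \eqref{Wt-cond}. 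For the pointwise limit I would write $[g_n(H),W_r]U_r\psi=g_n(H)\big(W_rU_r\psi\big)-W_r\big(g_n(H)U_r\psi\big)$: since $g_n(H)\to\mathrm{id}$ strongly and $W_r$ is bounded, both terms converge to $W_rU_r\psi$, hence $F_n(r)\to0$ in $L^2$ for each fixed $r$ (recall $U_r$ is unitary). Dominated convergence then gives $\|(g_{n,+}(H)-g_n(H))\psi\|\le\int_0^\infty\|F_n(r)\|\,dr\to0$, and together with the previous paragraph this yields $g_{n,+}(H)\psi\to\psi$, i.e. $s\text{-}\lim_n g_{n,+}(H)=\mathrm{id}$.

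The one place needing care is the majorant: one needs a bound on $\|[g_n(H),W_r]\|$ that is integrable in $r$ and uniform in $n$. Under $\mu>1$ this is immediate from the crude estimate above, which is why I would argue that way; if one only assumes $\mu>0$, one must instead use the sharper estimate of Lemma \ref{lem:g-W-comm}, and then one has to verify that its constant stays bounded along the sequence $(g_n)$ — for instance by choosing cut-offs $g_n$ with uniformly controlled derivatives of each fixed order — after which the same dominated-convergence argument applies verbatim.
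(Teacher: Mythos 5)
The paper does not actually contain a proof of Theorem \ref{thm:g+-approx}: it defers entirely to \cite{SigSof}. So there is no in-paper argument to compare yours against; what matters is whether your proof stands on its own, and it does. The decomposition into $(g_{n,+}(H)-g_n(H))\psi$ and $(g_n(H)-\mathrm{id})\psi$ is the natural one, the spectral-calculus/dominated-convergence treatment of the second piece is correct, and the treatment of the first piece via the integral representation \eqref{gt} (let $t\to\infty$, dominate $\|U_r^{-1}[g_n(H),W_r]U_r\psi\|$ by $C\langle r\rangle^{-\mu}\|\psi\|$ using the crude bound $\|[g_n(H),W_r]\|\le 2\|g_n\|_{L^\infty}\|W_r\|_{L^\infty}$, and get pointwise decay from $s$-$\lim g_n(H)=\mathrm{id}$ together with boundedness of $W_r$) is clean and uses only the $\mu>1$ hypothesis in force in Section \ref{sec:time-dep-pot}. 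Your two caveats are both well placed: the uniform bound $\sup_n\|g_n\|_{L^\infty}<\infty$ is genuinely needed (without it one can cook up $g_n$ blowing up on $\R\setminus I_n$ fast enough to defeat dominated convergence for a suitable $\psi$, since $\sigma(H)$ is unbounded above), and is clearly the intended normalization; and if one wanted the result under only $\mu>0$ one would indeed need the sharper commutator bound of Lemma \ref{lem:g-W-comm} with constants controlled uniformly in $n$, which requires uniform control of the weighted derivative norms entering \eqref{tildef-est} and hence a specific choice of the cut-offs $g_n$. The proof is correct as written.
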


See \cite{SigSof} for  the proof of this theorem. 

\medskip
As was mentioned after \eqref{eq-basic}, to bound  $D\Phi_{t s}$ from above, 
we have  `to pull  $g(H)$ from $\psi_0$' through $U_t$. 
For the time-dependent situation this is achieved using \eqref{PTR}.
Note that we need  $\mu>1$ for the remainder term in \eqref{PTR} to lead to an integrable contribution.

\bigskip
\section{Proof of Theorem \ref{thm:max-vel-x}}\label{sec:max-vel1-pf} 

\DETAILS{We prove that, for any smooth, {\bf (FP) non-decreasing; IM: we do not need this}, bounded function $\chi$ 
supported in $\R^+$ and for any $ v> k, a>b>0$ and $s\ge t$, 
\begin{align} \label{max-vel-est2}
\|\chi(s^{-1}(\x-a -v t))\,e^{-iHt}g(H)\chi_{A_{b}^-}\| 
\ls \,s^{-n},
\end{align} 
where $n$ is any positive number. 

{\bf (FP) have created a little notation section below.}

\comment{
I think the $\cF_\delta$ notation and the admissibility definition should be clarified, and could be simplified.

For example, take a fixed $f$ smooth, $f'\geq 0$ etc... with $f(\lambda)=1$ for $\lambda \geq 1$.
Then look at $f_\del(\lambda) := f(\del^{-1}\lambda)$, for $\del := v-k$ and work with this.

Note that in the argument the constants degenerate as $\delta\rightarrow 0$ but we don't track it, or say it.
Then allowing this dependence, we can say that $\tilde{f}$ admissible means 
$\tilde{f} \geq 0$ (smooth) and that the support is contained in the support of $f_\delta'$
(which is the same as $\tilde{f} \ls f_\delta'$.}}

\smallskip
{\it Definitions and notation}. 
We fix $c> v$ and 
 let $\cF$ be the set of functions  $0\le f\in C^\infty(\R)$, supported in $\R^+$ and
satisfying $f(\lam)=1$ for $\lam\ge c-v$,  and $f^\prime\ge 0$, with $\sqrt{f'}\in C^n$. Here and in the rest of the section, $n$ is the same as in \eqref{max-vel-est}.

 We say a function $h$ is {\it admissible} if it is smooth, non-negative with $\supp h\subset (0, c-v)$ and $\sqrt{h}\in C^n$. 
Note that if $h$ is admissible, then $h= f'$, with 
\[f(\lam)=\int_{-\infty}^\lam h(s)ds \ \text{ and } \ \Big( \int_{-\infty}^\infty h(s)ds\Big)^{-1} \, f \in \cF.\]   
 
Moreover, we will use the following notation (for any $\phi\in L^2, \|\phi\|=1$): 
\begin{align*}
\chi^-_b:=\chi_{A_{b}^-},\quad \psi_t :=e^{-iHt}g(H)\chi^-_b\phi \quad \text{and}
 \quad x_{ts} :=s^{-1}(\x -a -v t)  
\end{align*}
and the convention that 
$A\dot\le B$ and $A\dot\ls B$ mean that for any $n>0$, there is $C>0$ s.t. 
$A \le B+ C s^{-n}$ and $A \ls B+ C s^{-n}$, respectively.

We begin the proof of Theorem \ref{thm:max-vel-x} with the velocity bound below.

\begin{lemma}\label{lem:p-est}   
Given $f \in \cF$, let $u^2=f'$.
Then, for any $n$, there is $\tilde u$, with $\tilde u^2$ admissible, s.t.
\begin{align}\label{p-est}
\|p u(x_{ts})g(H)\psi\|&\,  \dot\le \, k\| u(x_{ts})g(H)\psi\|+ s^{-1}\| \tilde u(x_{ts})g(H)\psi\|.
\end{align}
\end{lemma}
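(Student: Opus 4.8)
This is a velocity bound: on states with energy in $I$ the momentum cannot exceed $k=\||p|\chi_I(H)\|$, even after the spatial cutoff $u(x_{ts})$ is applied. The plan is to separate the energy cutoff from the spatial one. Fix $\widetilde g\in C_0^\infty(I)$ with $0\le\widetilde g\le1$ and $\widetilde g\equiv1$ on $\supp g$, so that $\widetilde g(H)g(H)=g(H)$. Inserting this identity and commuting $u(x_{ts})$ through $\widetilde g(H)$ gives
\[
p\,u(x_{ts})g(H)\psi = p\,\widetilde g(H)\,u(x_{ts})g(H)\psi + p\,[u(x_{ts}),\widetilde g(H)]\,g(H)\psi .
\]
For the first term, since $\chi_I\widetilde g=\widetilde g$ and $0\le\widetilde g\le1$, one has $\|p\,\widetilde g(H)\|=\|p\,\chi_I(H)\,\widetilde g(H)\|\le\|p\,\chi_I(H)\|=\||p|\chi_I(H)\|=k$, hence $\|p\,\widetilde g(H)u(x_{ts})g(H)\psi\|\le k\,\|u(x_{ts})g(H)\psi\|$; this is where the sharp constant \eqref{k} enters. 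It therefore remains to bound the commutator term by $s^{-1}\|\tilde u(x_{ts})g(H)\psi\|$, modulo $O(s^{-n})$, for a suitable $\tilde u$ with $\tilde u^2$ admissible.

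\emph{The commutator remainder.} Here the key observation is that $V$, being a function of $x$, commutes with $u(x_{ts})$, so that
\[
[H,u(x_{ts})] = \tfrac12[p^2,u(x_{ts})] = -\tfrac{i}{2s}\sum_j\big(p_j\,u'(x_{ts})\,\widehat x_j + u'(x_{ts})\,\widehat x_j\,p_j\big),\qquad \widehat x := x/\langle x\rangle ,
\]
is $s^{-1}$ times a first order differential operator supported where $f'$ transitions; in particular no smoothness or decay of $V$ is used anywhere. Expanding $[u(x_{ts}),\widetilde g(H)]$ by the non-Abelian functional calculus of Appendix \ref{sec:commut} — iterating the resolvent identity $[u(x_{ts}),(z-H)^{-1}] = (z-H)^{-1}[H,u(x_{ts})](z-H)^{-1}$ against an almost analytic extension of $\widetilde g$, and commuting the resulting momenta and resolvents so as to carry the localizing factors to the right — one represents $p\,[u(x_{ts}),\widetilde g(H)]$ as a finite sum of terms of the form $s^{-k}\,B_k\,u^{(j_k)}(x_{ts})$, $k,j_k\ge1$, with $B_k$ bounded (the operator products being finite thanks to $\|p\,\chi_I(H)\|=k<\infty$), plus a remainder of order $s^{-N}$ for any prescribed $N\ge n$. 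Since the derivatives $u^{(j)}$, $j\ge1$, are all supported in $\supp f'$, they are dominated — after rescaling — by a single $\tilde u$ with $\tilde u^2$ admissible, whence $\|p\,[u(x_{ts}),\widetilde g(H)]g(H)\psi\|\ \dot\le\ s^{-1}\|\tilde u(x_{ts})g(H)\psi\|$, and together with the previous paragraph this gives \eqref{p-est}.

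\emph{Main difficulty.} The only substantive ingredient is the commutator estimate above: one must commute the spectral cutoff $\widetilde g(H)$, built from the merely self-adjoint $H$, past the slowly varying multiplication operator $u(x_{ts})$ in a way that simultaneously gains a full power of $s^{-1}$ \emph{and} preserves the spatial localization — the error must come back as $\|\tilde u(x_{ts})g(H)\psi\|$, not as an unlocalized $O(s^{-1})$. This is exactly the content of the estimates collected in Appendix \ref{sec:commut}; the hypothesis $\sqrt{f'}\in C^n$ is what guarantees that enough derivatives of $u$ are available to run the expansion to the order $N\ge n$ that the relation $\dot\le$ requires.
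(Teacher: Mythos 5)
Your proposal is correct and follows essentially the same route as the paper: insert $\tilde g(H)$ with $\tilde g\equiv 1$ on $\supp g$, bound $\|p\,\tilde g(H)\|\le k$ for the main term, and expand $p\,[u(x_{ts}),\tilde g(H)]$ via the commutator expansion of Appendix \ref{sec:commut} into $\sum_k s^{-k}\,pB_k\,u^{(k)}(x_{ts})+O(s^{-n})$ with $pB_k$ bounded, absorbing the derivative terms into a single admissible $\tilde u^2$. This is exactly the paper's argument.
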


\begin{proof}  
Writing $g(H)=\tilde g(H)g(H)$, with $\tilde g\in C^\infty_0(I)$, with $\tilde g=1$ on supp$g$, 
and commuting $\tilde g(H)$ to the left, we find
\begin{align}\label{p-rel} 
p u(x_{ts})g(H)&=  p\tilde g(H) u(x_{ts})g(H)+p [u(x_{ts}), \tilde g(H)] g(H). 
\end{align}
Now, by \eqref{comm-exp-x},  
$p[u(x_{ts}), \tilde g(H)]=\sum_{k=1}^{n-1}{s^{- k}\over{k!}}p B_k u^{(k)}(x_{ts}) +O(s^{-n})$ 
for any $n$, with $p B_k$ bounded. With this, taking the norm of \eqref{p-rel}  and using $\| p\tilde g(H)\|\le k$, 
we arrive at 
\eqref{p-est} with $ \tilde u^2:= C(u^2+\sum_{k=1}^{n-1}s^{- k}(u^{(k)})^2)$, for some $C>0$.
\end{proof} 

Next is a key statement in the proof of Theorem \ref{thm:max-vel-x}:

\begin{prop}\label{prop:propag-est1} 
Under the hypothesis of Theorem \ref{thm:max-vel-x}, for any $f\in \cF$ and any $n$, there is $\tilde f\in \cF$ s.t., for $s\ge t$, 
 \begin{align}\label{propag-est1} 
\int_0^t \lan f'(x_{rs})\ran_r \,dr \ls s \lan \tilde f'(x_{0s})\ran_0+O(s^{-n}). 
\end{align} 
\end{prop}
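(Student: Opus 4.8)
The plan is to run the method of propagation observables from the introduction with the family $\Phi_{rs} := f(x_{rs})$, where $x_{rs} = s^{-1}(\jx - a - vt)$ is replaced by $x_{rs} = s^{-1}(\jx - a - vr)$ (the propagation observable depends on the running time $r$, with the adiabatic parameter $s\ge t$ held fixed). First I would compute the Heisenberg derivative $D\Phi_{rs} = i[H,\Phi_{rs}] + \partial_r \Phi_{rs}$. The time derivative contributes $\partial_r f(x_{rs}) = -\frac{v}{s} f'(x_{rs})$, which is the good negative term that will produce the integral on the left of \eqref{propag-est1}. The commutator term $i[H,f(x_{rs})] = i[\tfrac12 p^2, f(x_{rs})]$ must be expanded; the leading contribution is $\tfrac1s \cdot \tfrac12(p\cdot\widehat x\, f'(x_{rs}) + f'(x_{rs})\,\widehat x\cdot p) + O(s^{-2})$ where $\widehat x = x/|x|$, using the commutator expansion \eqref{comm-exp-x} from the appendix, the $s^{-1}$ factors in $x_{rs}$ keeping the multiple-commutator remainders under control.

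The key algebraic point is that, modulo the energy cut-off, the symbol $\tfrac12 p\cdot\widehat x$ is bounded by $k/2$ on $\Ran g(H)$ and hence the full commutator term is dominated by $\frac{k}{s} f'(x_{rs})$ up to lower order; combined with the $-\frac{v}{s} f'(x_{rs})$ from $\partial_r$ and $v > k$, one gets $g(H) D\Phi_{rs} g(H) \dot\le -\frac{v-k}{s} g(H) f'(x_{rs}) g(H) + (\text{errors})$, i.e. $D\Phi_{rs}$ is essentially negative. This is exactly where Lemma \ref{lem:p-est} enters: writing $f' = u^2$, the symmetrized term $\tfrac1{2s}(p u(x_{rs}) + \dots)$ is estimated via $\|p u(x_{rs}) g(H)\psi\| \dot\le k\|u(x_{rs})g(H)\psi\| + s^{-1}\|\tilde u(x_{rs})g(H)\psi\|$, producing $\lan f'(x_{rs})\ran_r$ and error terms whose symbols $\tilde u^2$ are again admissible, hence of the form $\tilde f{}'$ for some $\tilde f \in \cF$ by the remark following the definition of admissibility. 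The $s^{-1}$-smaller error terms, together with any remainders from \eqref{comm-exp-x}, integrate in $r$ over $[0,t]$ with $t\le s$ to give either contributions absorbed into $s\lan\tilde f'(x_{0s})\ran_0$ (after enlarging $\tilde f$) or genuine $O(s^{-n})$ terms.

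Then I would invoke the basic equality \eqref{eq-basic}: integrating $\frac{d}{dr}\lan\Phi_{rs}\ran_r = \lan D\Phi_{rs}\ran_r$ from $0$ to $t$ and rearranging gives
\begin{align*}
\frac{v-k}{s}\int_0^t \lan f'(x_{rs})\ran_r\, dr \dot\le \lan \Phi_{0s}\ran_0 - \lan\Phi_{ts}\ran_t + (\text{error integrals}) \le \lan f(x_{0s})\ran_0 + (\text{errors}),
\end{align*}
using $\Phi_{ts}\ge 0$ to drop $-\lan\Phi_{ts}\ran_t$. Since $f\in\cF$ satisfies $f(\lambda) = 1$ for $\lambda\ge c-v$ and $\supp f \subset\R^+$, and since $0\le f \le f^2/f(c-v)\ls$ something dominated by an admissible-antiderivative --- more precisely one bounds $f(x_{0s}) \ls \tilde f(x_{0s})$ for a suitable larger $\tilde f\in\cF$ whose derivative controls both $f$ and the error symbols --- the right side is absorbed into $s\lan\tilde f'(x_{0s})\ran_0 + O(s^{-n})$. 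Multiplying through by $s/(v-k)$ and noting $v-k>0$ is a fixed constant yields \eqref{propag-est1}.

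The main obstacle I expect is the bookkeeping of the commutator expansion: showing that the remainder $O(s^{-n})$ terms in \eqref{comm-exp-x} really are $O(s^{-n})$ in operator norm on $\Ran g(H)$ (which is why the energy cut-off and the appendix's non-Abelian functional calculus are needed), and that all the intermediate error terms of order $s^{-k}$, $k\ge 1$, have symbols that are admissible (so they can be folded into $\tilde f$) rather than merely bounded. The $\sqrt{f'}\in C^n$ hypothesis in the definition of $\cF$ is precisely what makes the square-root trick $f'=u^2$ legitimate and keeps $u$ smooth enough for the commutator estimates; I would need to check this regularity is preserved when passing from $f$ to $\tilde f$, which it is by the construction $\tilde u^2 = C(u^2 + \sum s^{-k}(u^{(k)})^2)$ in Lemma \ref{lem:p-est}.
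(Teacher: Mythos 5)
Your overall strategy is the paper's: the same propagation observable $f(x_{ts})$, the same splitting of the Heisenberg derivative into the good term $-vs^{-1}f'(x_{ts})$ and the commutator term, the same factorization $f'=u^2$ feeding into Lemma \ref{lem:p-est}, and the same use of the basic equality \eqref{eq-basic} with the nonnegative term $\lan \Phi_{ts}\ran_t$ dropped. (Two cosmetic differences: the paper writes the commutator term exactly as $s^{-1}u(x_{ts})\,\g\,u(x_{ts})$ using $[[\g,u],u]=0$ rather than carrying an $O(s^{-2})$ remainder at that stage; and the passage from $\lan f(x_{0s})\ran_0$ to $s\lan \tilde f'(x_{0s})\ran_0$ is handled no more carefully in the paper than in your sketch, so I will not press you on it.)

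There is, however, one genuine gap: the claim that the lower-order error terms ``integrate in $r$ over $[0,t]$ \dots to give \dots genuine $O(s^{-n})$ terms.'' After a single pass the error is $Cs^{-2}\int_0^t\lan h(x_{rs})\ran_r\,dr$ with $h=\tilde u^2$ admissible, and the only a priori bound available is $\|h\|_\infty\, t\, s^{-2}\ls s^{-1}$ (using $t\le s$). That is $O(s^{-1})$, not $O(s^{-n})$, so your argument as written proves the proposition only with an $O(s^{-1})$ remainder --- not enough for the arbitrary-$n$ decay claimed in \eqref{max-vel-est}. The missing idea is the bootstrap in the paper's proof: since $h$ is admissible, $h\ls j'$ for some $j\in\cF$, so the error integral $\int_0^t\lan j'(x_{rs})\ran_r\,dr$ is itself exactly of the form estimated by the proposition; assuming \eqref{propag-est1} holds for some $n'$ and applying it to this integral (with $f$ replaced by $j$) upgrades the remainder from $O(s^{-n'})$ to $O(s^{-n'-1})$, and iterating from the trivial base case $n=0$ reaches any $n$. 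You have the ingredients for this --- you correctly observe that the error symbols are admissible, hence of the form $\tilde f'$ with $\tilde f\in\cF$ --- but without making the recursion explicit, the step from $O(s^{-1})$ to $O(s^{-n})$ does not go through.
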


\begin{proof}
We proceed inductively, showing that \eqref{propag-est1} holds for  $n=0$ and then  assuming it  
holds for some $n=n'\in \R$ and proving that it holds for $n=n'+1$, 
and proceed in this way until we reach an arbitrary $n$. 
To this end, we use  the time dependent observable 
\begin{align}\label{propag-obs1}
\Phi_s(t) & = f(x_{ts}), \quad f \in \cF, 
\end{align}
with $0\le t\le s$.  In order to estimate $\left<\Phi_s(t)\right>_t=\lan\psi_t,\,\Phi_s(t)\psi_t\ran$,  
we apply \eqref{dt-Heis} and  the basic equality \eqref{eq-basic}. 
We start by computing $D\Phi_s(t)$. First, we have
\begin{align} \label{dt-Phi}
{\partial\over{\partial t}}\Phi_s(t)=-s^{-1}v \,f^\prime(x_{ts}).
\end{align}
Then, we let $\g : = {1\over 2}(p\cdot \n\x+\n\x\cdot p)$, with $p:=-i\n$.
Factorizing $f^\prime= u^2$ and using that $[H, \x]=\g$ and
$[[\g,u],u]=0$, we find
\begin{align} \label{H-Phi-comm}
i[H,\Phi_s(t)]=&{i\over 2}[p^2,\Phi_s(t)]
=\frac12 s^{-1}(\g f^\prime(x_{ts})+f^\prime(x_{ts})\g)
\notag
\\ 
& = s^{-1}\,u(x_{ts})\,\g\,u(x_{ts}).
\end{align}

This equation, together with Eq. \eqref{dt-Phi}, yields:
\begin{align} \label{DPhi-expr-x}
D\Phi_s(t)= s^{-1}\,u(x_{ts})\,(\g-v)\,u(x_{ts}).\end{align}
Now, we claim that, with $k$ defined in \eqref{k}, there is $C>0$ s.t.  
\begin{align}\label{A-est}\notag 
g(H)u(x_{ts})\,\g\,u(x_{ts})g(H)\le k & g(H)u(x_{ts})^2g(H)
\\
& + C  s^{-1}g(H)\tilde u(x_{ts})^2g(H),
\end{align} 
where $\tilde u(x_{ts})^2$  is an admissible function. 
To see this, we first estimate 
\begin{align} 
\label{A-est2}|\lan \psi, g(H)u(x_{ts})&\,\g\,u(x_{ts})g(H) \psi\ran|
\notag
\\ 
& \le \|\n\x u(x_{ts})g(H)\psi\|\|p u(x_{ts})g(H)\psi\|.
\end{align} 
This inequality, together with \eqref{p-est},  gives 
\begin{align}\label{p-est1}
|\lan \psi, \, & g(H)u(x_{ts})\,\g\, u(x_{ts})g(H) \psi\ran|\notag
\\ 
& \, \dot\le \,  \| u(x_{ts})g(H)\psi\| \Big(k\| u(x_{ts})g(H)\psi\|+s^{-1}\| \tilde u(x_{ts})g(H)\psi\|\Big),
\end{align} 
which implies \eqref{A-est}.
Now, using \eqref{A-est}, together with \eqref{DPhi-expr-x} 
and the definitions $u(x_{ts})^2=f'(x_{ts})$ and $h(x_{ts}):=\tilde u(x_{ts})^2$, 
we obtain
\begin{align}\label{DPhi-est1}
g(H)&D\Phi_s(t)g(H)\notag
\\ 
&  \dot\le \, (k -v)s^{-1} g(H)f'(x_{ts}) g(H) + C s^{-2} g(H)h(x_{ts}) g(H).
\end{align}

Taking the matrix element of this inequality with the vector 
$\tilde\psi_t=e^{-iHt}\chi^-_b\phi$ and using 
 \begin{align} \label{pt-triv}g(H)\tilde\psi_t=e^{-iHt}g(H)\chi^-_b\phi=:\psi_t,\end{align} we find that   
\[\lan D\Phi_s(t)\ran_t\, \dot\le \, (k -v)s^{-1} \lan f'(x_{ts}) \ran_t + C s^{-2} \lan h(x_{ts}) \ran_t.\]
Hence, Eqs. \eqref{dt-Heis} and \eqref{eq-basic} and the definition $\Phi_{ts}:= f(x_{ts})$ give  
\begin{align} \label{propag-est2} 
\lan f(x_{ts})\ran_t+(v-k)s^{-1}&\int_0^t \lan  f'(x_{rs})\ran_r dr\notag\\
& \dot\le \, \lan f(x_{0 s})\ran_0+C s^{-2}\int_0^t \lan  h(x_{rs})\ran_r dr. 
\end{align}
Since $k < v$ and since $h=\tilde u^2$  is an admissible function and therefore $h\gs j'$ for some $j\in \cF$, \eqref{propag-est2} implies that
\begin{align}\label{propag-est3} 
\int_0^t \lan  f'(x_{rs})\ran_r dr\, 
  \dot\ls \, 
  s \lan f(x_{0 s})\ran_0
  + s^{-1}\int_0^t \lan  j'(x_{rs})\ran_r dr. 
\end{align}
(The constant entering the relation $\ls$ here is bounded by a power of $(v-k)^{-1}$.)
This, together with the boundedness of $j'$,  
gives estimate \eqref{propag-est1} with $n=0$.
Now, assuming \eqref{propag-est1} holds for some $n=n'\ge 0$, 
  and using this (with $f=j$) for the integral on the r.h.s. of \eqref{propag-est3}, we see that  
\eqref{propag-est1} holds for $n=n'+1$.  
\end{proof}

\begin{proof}[End of the proof of Theorem \ref{thm:max-vel-x}]
 Recall that $\supp f\subset \R^+$ for any $f\in \cF$, 
and therefore $\supp f(x_{0 s})\subset \{\x>a+\delta s\}$. 
Since $\supp \chi^-_b\subset \{\x\le b\}$ and $b<a$, the functions $f(x_{0 s})$ and $\chi^-_b$ have disjoint supports. 
Hence, we have by \eqref{comm-exp-x} of Lemma \ref{lem:commut-exp} 
(and the fact that $f^{(k)}(x_{0s})\chi^-_b=0$) 
that, for any $n$, 
\begin{align}\label{local-est3}
f(x_{0 s})g(H)\chi^-_b=[f(x_{0 s}),g(H)]\chi^-_b=O(s^{-n})
\end{align}
and therefore, for any $n$ and any $f\in \cF $, 
\begin{align}\label{local-est4}\lan f(x_{0 s})\ran_0=\lan g(H)\chi^-_b\phi,
f(x_{0 s})g(H)\chi^-_b\phi\ran=O(s^{-n}).\end{align} 

Next, retaining the first term in \eqref{propag-est2} and dropping the second one and using \eqref{propag-est1} and \eqref{local-est4}, we conclude that, for any $n$, 
\begin{align} \label{propag-est4} 
\lan f(x_{ts})\ran_t \dot\ls \, s^{-n}.  \end{align}
Now, for any $f\in \cF $, we have $f(\lam)=1$ for $\lam\ge c-v$, and therefore $f(x_{t s})=1$ on $\{\x\ge a+vt +(c-v) s\}$.
Recalling that $s\ge t$ 
 and $A_\rho^\pm:=\{x\in \R^d: \pm \x \ge \pm\rho\}$, we have  
\[A_{ct+a}^+ \subset \{f(x_{t t})=1\}.\] 
Hence, we conclude that $\lan \chi_{A_{ct+a}^+}\ran_t \dot\ls \, t^{-n}$. Since $ \psi_t :=e^{-iHt}g(H)\chi^-_b\phi$ for any $\phi\in L^2, \|\phi\|=1$, this implies \eqref{max-vel-est}. \end{proof}

\begin{proof}[Proof of Eq. \eqref{max-vel-est-info}]
By our assumption, $\rho> a+ct$. 
 We  set $s=(\rho-a)/c\ge t$. This gives $\rho=a +c s\ge a +c t$ and therefore   
$A_{\rho}^+\subset  \{\x\ge a+vt +(c-v) s\}\subset\{f(x_{t s})=1\}$.   Using this together with estimate \eqref{propag-est4} in the time-reversed form  $\|f(x_{ts})^{1/2}\,e^{iHt}g(H)\chi_{A_{b}^-}\|\ls \,s^{-n}$,  and the definitions  $s=(\rho-a)/c$  and  $\al_t(B):= e^{iHt}B e^{-iHt}$, we arrive at \eqref{max-vel-est-info}. \end{proof}

\bigskip
\section{Proof of Theorem \ref{thm:max-vel-x-Ht}}\label{sec:max-vel-x-Ht-pf}

We follow the proof of Theorem \ref{thm:max-vel-x}. We use the notation $A\dot\le B$ and $A\dot\ls B$ introduced in the previous section. 
Let $$\psi_t := U_tg_+(H)\phi_0, \qquad \phi_0 := \chi^-_b \phi.$$
We begin by using  \eqref{dt-Heis} as before and observe that the identities
\eqref{DPhi-expr-x} and \eqref{DPhi-est1} still hold for $H_t$. However, the relation \eqref{pt-triv}  (the trivial pull-through formula) fails. 
Now, instead of \eqref{pt-triv}, we use the pull-through relation \eqref{PTR} 
to pull in $g(H)$ from $\psi_t$. 
For convenience denote
$R := \psi_t - g(H)U_t\phi_0 = O(t^{-\mu})\phi_0$.
Then
\begin{align}
\label{Ht-pf1}
\begin{split}
& \p_t\lan \Phi_{ts}\ran_t 
  = \lan g(H)U_t\phi_0, D\Phi_s(t) g(H)U_t\phi_0\,\ran 
  \\
  & + \lan R, D\Phi_s(t) g(H) U_t \phi_0\,\ran 
  +  \lan g(H)U_t\phi_0, D\Phi_s(t) R\,\ran
  +  \lan R, D\Phi_s(t) R\,\ran.
\end{split}
\end{align}

We claim that the second line above is $O(s^{-1} t^{-\mu})$.
This follows from \eqref{DPhi-expr-x} and \eqref{p-est}  
 for the first two terms; 
for the third one we can get the better bound $O(s^{-1}t^{-2\mu})$ using
\begin{align*}
\| p (U_tg_+(H) - g(H) U_t) \| = O(t^{-\mu}),
\end{align*}
which follows from \eqref{g+gt} and  estimates \eqref{p-rep} and \eqref{p-U-comm-est} of Appendix \ref{sec:commut}. 

Going back to \eqref{Ht-pf1}, and using also \eqref{DPhi-est1} to bound the first terms on the r.h.s., we get, for some admissible function $\tilde f$,
\begin{align}
\notag
\p_t\lan \Phi_{ts}\ran_t 
& = \lan g(H)U_t\phi_0, \, D\Phi_s(t)g(H)U_t\phi_0\,\ran 
  + O\big(s^{-1} t^{-\mu}) \\
\notag
&\dot\le \, (k-v)s^{-1} \lan g(H)U_t\phi_0, \,f'(x_{ts})\,g(H)U_t\phi_0\ran 
  \\
\label{DPhi-est1-Ht}
  & + Cs^{-2} \lan g(H)U_t\phi_0, \,\tilde f(x_{ts})\,g(H)U_t\phi_0\ran
  + O(s^{-1}  t^{-\mu}). \end{align}

Next, 
passing back to $\psi_t$ by using the pull-through relations in the opposite direction 
and the fact that $\tilde{f}$ is bounded, we obtain
\begin{align} \label{DPhi-est2-Ht}
\p_t\lan \Phi_{ts}\ran_t \, \le \,& (k-v)s^{-1} \,\lan f'(x_{ts})\ran_t + Cs^{-2}   + C s^{-1} t^{-\mu}.
 \end{align}
Since $v>k$, we can drop the first term on the r.h.s. and use the basic equality \eqref{eq-basic}, 
the definition $\Phi_{ts}:= f(x_{ts})$ and the conditions $\mu>1$, $s\geq t$, to find 
\begin{align} \label{propag-est2-Ht} 
\lan f(x_{ts})\ran_t\le &\lan f(x_{0s})\ran_0
+ C s^{-1}. 
\end{align}
For the first term on the r.h.s., we claim 
\begin{align} \label{local-est3-Ht} 
\lan f(x_{0s})\ran_0=O(s^{-2\mu/(\mu+1)}).\end{align} 
To prove this estimate,  
we recall  $\psi_t:=U_t g_+(H)\chi^-_b\phi$, note that $\lan f(x_{0s})\ran_0=\| \chi(x_{0s})\psi_0\|^2$, 
with $\chi^2= f$, and pass from $g_+(H)$ to $g_{s^\beta}(H):=U_{s^\beta}^{-1} g(H) U_{s^\beta}$, 
with $\beta<1$, paying with the error $O(s^{-\beta \mu})$:
\[\chi(x_{0s})\psi_0=\chi(x_{0s})g_+(H)\chi^-_b\phi=\chi(x_{0s})g_{s^\beta}(H)\chi^-_b\phi+O(s^{-\beta \mu}).\] 
In Lemma \ref{lem:chi-g+-est} of Appendix \ref{sec:commut} we show that \begin{align}
\label{chi-g+-est} 
\chi(x_{0s})g_{s^\beta}(H)\chi^-_b =O(s^{ \beta-1}).
\end{align} 
This, together with the previous estimate 
and the choice $\beta=1/(\mu+1)$, yields (after squaring up) \eqref{local-est3-Ht}.
\eqref{local-est3-Ht} and \eqref{propag-est2-Ht} imply
\begin{align*}
\lan f(x_{ts})\ran_t\le Cs^{-1},
\end{align*} 
which, in view of the definition of $f$, gives, after setting $s=t$, Theorem \ref{thm:max-vel-x-Ht}.
$\Box$ 

\bigskip {\bf Acknowledgement.} We are grateful to anonymous referees for useful remarks and suggestions. The work on this paper was supported in part by NSERC Grant No. NA7901 (JA and IMS), by a start-up grant from the University of Toronto and NSERC Grant No. 06487 (JA and FP) and by NSF grants DMS-1600749 and NSFC11671163 (AS). 
None of the authors has a conflict of interest.  
  
\appendix 

\section{Commutator expansions}\label{sec:commut}
In this appendix, we present commutator expansions and estimates, first derived in \cite{SigSof} 
and then improved in \cite{Skib,HunSig1,HunSigSof} (see \cite{Dav,HelffSj,IvrSig} for the original work).
We follow \cite{HunSig1} and refer to this paper for 
details and references.
Here, we mention only that, by the Helffer-Sj\"ostrand formula, 
a function $f$ of a self-adjoint operator $A$ can be written as
\begin{align} \label{fA-repr}
&f(A)=\int d\widetilde f(z)(z-A)^{-1},
\end{align}
where $\widetilde f(z)$ is an almost analytic extension of $f$ to $\C$ supported in a complex neighbourhood of $\supp f$.
For $f\in C^{n+2}(\R)$, we can choose $\widetilde f$ satisfying the estimates (see (B.8) of \cite{HunSig1}):
\begin{align} \label{tildef-est}
&\int |d\widetilde f(z)||\im(z)|^{-p-1}\ls \sum_{k=0}^{n+2}\|f^{(k)}\|_{k-p-1},\end{align}
where $\|f \|_{m}:=\int \x^m |f(x)|$.
Note that \cite{HunSig1} requires $f\in C^\infty_0(\R)$, 
but one can easily extend representation \eqref{fA-repr} and the needed results to $f$'s satisfying 
\[\sum_{k=0}^{n+2}\|f^{(k)}\|_{k-2}<\infty,\] for some $n\ge 1$, which covers the case $f\in C^\infty(\R)$ with $f'\in C^\infty_0(\R)$ which comes up in this work.

The essential commutator estimates are incorporated in the following lemma:

\begin{lemma}\label{lem:commut-exp} %
Let $f\in C^\infty(\R)$ be bounded, 
  with $\sum_{k=0}^{n+2}\|f^{(k)}\|_{k-2}<\infty$, for some $n\ge 1$.
Let  $x_s=s^{-1}(\x-a)$ for $a >0 $ and
$1\le s<\infty$. Suppose that $H$ satisfies \eqref{V-cond} and let 
 $g\in C_0^\infty(\R)$. Then, for any $n\ge 1$, 
 \begin{align}\label{comm-exp-x} [g(H), f(x_s)]&=  \sum_{k=1}^{n-1}{s^{- k}\over{k!}}B_k f^{(k)}(x_s) +O(s^{-n}), \end{align}
 uniformly in $a\in \R$, where $H^j B_k, j=0, 1, k =1, \dots, n-1,$ 
 are bounded operators and $\|H^j O(s^{-n})\|\ls s^{-n}, j=0, 1$. 
 For $n=1$, the sum on the r.h.s. is omitted. 
\end{lemma}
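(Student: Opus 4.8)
\textbf{Proof plan for Lemma \ref{lem:commut-exp}.}
The strategy is to insert the Helffer--Sj\"ostrand representation \eqref{fA-repr} for $g(H)$ into the commutator $[g(H), f(x_s)]$ and repeatedly expand the resolvent commutator. Concretely, I would write
\[
[g(H), f(x_s)] = \int d\widetilde g(z)\,\big[(z-H)^{-1}, f(x_s)\big],
\]
and use the standard resolvent identity $[(z-H)^{-1}, f(x_s)] = (z-H)^{-1}[H,f(x_s)](z-H)^{-1}$. The point of the first resolvent on the right is that $\|(z-H)^{-1}\|\le|\im z|^{-1}$, while the factor $(z-H)^{-1}$ on the far right can be commuted back through $[H,f(x_s)]$ to produce, at each stage, a leading term with one fewer resolvent-pair plus a higher commutator. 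Iterating this $n-1$ times produces the sum $\sum_{k=1}^{n-1} s^{-k}(k!)^{-1} B_k f^{(k)}(x_s)$, where $B_k$ collects the iterated commutators $\mathrm{ad}_H^{(k)}$ acting through the resolvents, together with a remainder involving $n$ resolvent factors.

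The key computation is the single commutator $[H, f(x_s)]$. Since $H = \tfrac12 p^2 + V$ and $V$ commutes with $f(x_s)$ (it is a multiplication operator in $x$, as is $f(\jap{x})$), only the Laplacian contributes: $[H, f(x_s)] = \tfrac12[p^2, f(x_s)]$, which by the chain rule is $s^{-1}$ times a first-order differential operator with bounded coefficients (schematically $s^{-1}(\g f'(x_s)/|x| + \dots)$, symmetrized), hence of the form $s^{-1}$ times (bounded operator)$\times f'(x_s)$ plus lower-order pieces carrying extra powers of $s^{-1}$. Each such factor is $H$-bounded because $p$, and more generally $p^j$ for $j\le 2$, are $\jap{H}$-bounded by \eqref{V-cond} and the Kato self-adjointness; this is what lets one assert that $H^j B_k$, $j=0,1$, are bounded. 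Iterated commutators $\mathrm{ad}_H^{(k)}(f(x_s))$ are handled the same way: each application of $\mathrm{ad}_H$ differentiates $f$ once more and pulls out another $s^{-1}$, and the accumulated differential operator in $p$ stays of bounded order modulo $\jap{H}$-bounded remainders, so after factoring out $f^{(k)}(x_s)$ the operator coefficient $B_k$ satisfies the stated bound.

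The main obstacle — and the only genuinely quantitative step — is controlling the remainder term after $n-1$ iterations and showing it is $O(s^{-n})$ in operator norm, including after multiplying by $H$. This remainder is an integral over $z$ of a product of $n$ resolvents $(z-H)^{-1}$ interleaved with commutator factors $[H,f^{(\cdot)}(x_s)]$, each of which supplies one power of $s^{-1}$. The powers of $|\im z|^{-1}$ generated by the $n$ resolvents must be absorbed by the almost-analytic extension $d\widetilde g(z)$, which vanishes to high order at the real axis; this is exactly what the estimate \eqref{tildef-est} provides, giving $\int |d\widetilde g(z)||\im z|^{-n}\ls \sum_{k\le n+2}\|g^{(k)}\|_{k-n-1} < \infty$ since $g\in C_0^\infty$. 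The bookkeeping to match the $|\im z|$-powers against the number of resolvents, and to verify that the extra factor of $H$ (needed for the $H^j O(s^{-n})$ claim with $j=1$) can be commuted into one resolvent at the cost of one fewer power of $|\im z|$, is routine but is where all the care goes; the uniformity in $a\in\R$ is immediate since $a$ enters only through the argument of $f^{(k)}$ and all bounds are in terms of $\sup$-type norms of derivatives of $f$. I would cite \cite{HunSig1} for the detailed form of this argument, as the paper does.
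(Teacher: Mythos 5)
Your plan is sound and would yield the lemma, but it is the ``dual'' of the argument the paper actually runs, and it is worth seeing the difference. You apply the Helffer--Sj\"ostrand formula to $g(H)$ and expand in iterated commutators $\mathrm{ad}_H^{(k)}(f(x_s))$, so your remainder is an integral of products of resolvents $(z-H)^{-1}$ interleaved with unbounded first-order factors ($\g$, $p$), which you must control via $\|p(z-H)^{-1}\|\ls \lan \re z\ran^{1/2}/|\im z|$ against the decay of $d\widetilde g$. The paper instead applies Helffer--Sj\"ostrand to $f(x_s)$, viewed as a function of the multiplication operator $x_s$: the expansion \eqref{comm-exp-x} then drops out algebraically with $B_k=\mathrm{ad}_{\x}^k\, g(H)$ and a remainder built from resolvents $(z-x_s)^{-1}$, which are trivially bounded by $|\im z|^{-1}$ with no unbounded operators in sight; all the analytic work is quarantined in a separate induction (\eqref{Bk}--\eqref{Bm}) showing $H^jB_k$ bounded, which exploits the structural facts $C_1=\mathrm{ad}_{\x}H=i\g$ ($S$-bounded), $C_2=-|\n\x|^2$ (bounded), and $\mathrm{ad}_{\x}^kH=0$ for $k\ge 3$ so that the resolvent products terminate. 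The trade-off: the paper's route cleanly separates algebra from estimates and gives the closed form of $B_k$, but it must extend the Helffer--Sj\"ostrand representation to the non-decaying $f\in\cF$ (the paper flags this after \eqref{tildef-est}); your route sidesteps that issue entirely, since Helffer--Sj\"ostrand is only ever applied to $g\in C_0^\infty$ and $f$ enters only through its decaying derivatives, at the price of messier bookkeeping to extract $s$-independent coefficients $B_k$ with $f^{(k)}(x_s)$ standing to the right and to absorb the accumulating $|\im z|^{-1}$ powers. Both hinge on the same two facts — $[\x,H]$ is $H^{1/2}$-bounded and the double commutator is bounded — so either write-up is acceptable; if you pursue yours, the one step to make fully explicit is the normal-ordering move $u\,R(z)=R(z)\,u+R(z)[H,u]R(z)$ that pushes each $f^{(k)}(x_s)$ to the right while gaining a power of $s^{-1}$.
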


\begin{proof}
Omitting the argument $x_s$ in 
$f$ and $f^{(k)}$ and using (B.14)-(B.15) of \cite{HunSig1}, we have
 \begin{align}
\label{comm-exp1}[g(H), f]&=\sum_{k=1}^{n-1}{s^{- k}\over{k!}}B_k f^{(k)}+s^{-n}Re(s);\\
\label{Bk-bnd}B_k&= \,{ad_{\x}^kg(H)};\\ 
\label{Res-exp} Re(s)&=\int d\widetilde f(z)(z-x_s)^{-1}B_n(z-x_s)^{-n}, 
     \end{align}    
where for $n=1$, the sum on the r.h.s. is omitted. 
Now we show that the operators $H^j B_k, j=0, 1, k =1, \dots, n,$ and  $Re(s)$ satisfy the estimates:  
\begin{align} \label{Rem-est}
\|H^j B_k\|\ls 1,\ j=0, 1, k =1, \dots, n, \quad \|H^j Re(s)\|\ls 1, 
\end{align} 
uniformly in $a$ (and of course $s$).  
Let $R(z):=(z-H)^{-1}$. We claim that
\begin{align} \label{Bk}
& B_k=\sum_{\pi\in \Pi_k}\int d\widetilde g(z)R(z)C_{k_1} \dots R(z)C_{k_s} R(z),
\end{align}
where $\pi=(k_1, \dots, k_s)$, with $1\le k_j\le 2$ and $k_1 + \dots + k_s=k$, 
are {\it ordered} partitions of $k$,  
and $C_{j}:=ad_{\x}^j H$. 
Indeed, using the formula \eqref{fA-repr} for $g(H)$, \begin{align} \label{gH-repr}
&g(H)=\int d\widetilde g(z)R(z),\end{align} 
and the relation  $ad_{\x}R = RC_1R$, we see that this is true for $k=1$:
 \begin{align} \label{B1}
&B_1=\int d\widetilde g(z)R(z)C_1 R(z).\end{align} 
  Now, assuming that \eqref{Bk} holds for $k=m$, we prove it for $k=m+1$. To this end, we use   that $ad_{\x}B_j=B_{j+1}$, $ad_{\x}R = RC_1R$ and $ad_{\x}C_j=C_{j+1}$ to obtain 
\begin{align} \label{Bm}
B_{m+1}=&\sum_{\pi\in \Pi_m}\int d\widetilde g(z)\bigg\{\sum_{a=1}^m \prod_{j=1}^{a-1}(R(z)C_{k_j})(R(z)C_1R(z)C_{k_a}+R(z)C_{k_a+1})\notag\\
&\qquad \times \prod_{j=a+1}^{s}(R(z)C_{k_j}) R(z)+\prod_{j=1}^{s}(R(z)C_{k_j})(R(z)C_1R(z))\bigg\}\notag\\
&=\sum_{\pi\in \Pi_m}\int d\widetilde g(z)\bigg\{\sum_{a=1}^m \big[\prod_{j=1}^{a-1}(R(z)C_{k_j})(R(z)C_1)\prod_{j=a}^{s}(R(z)C_{k_j})\notag\\
&\qquad +\prod_{j=1}^{a-1}(R(z)C_{k_j})(R(z)C_{k_a+1}) \prod_{j=a+1}^{s}(R(z)C_{k_j})\big]\notag\\
&\qquad +\prod_{j=1}^{s}(R(z)C_{k_j})(R(z)C_1)\bigg\}R(z).\end{align}
It is not hard to see that  \eqref{Bm} is of the form  \eqref{Bk} with $k=m+1$.
Finally, since  $ad_{\x}^k H=0$, for $k\ge 3$, we have that   
either $k_a+1\le 2$ or $C_{k_a+1}=0$.

To prove boundedness of $B_k$, we use that $C_1:=[\x, H]=i\g$, where, 
recall, $\g={1\over 2}(p\cdot \n\x+\n\x\cdot p)$, with $p:=-i\n$, and $C_2:=[\x, [\x, H]]=- |\n\x|^2$. 
Let \begin{align} \label{S-def}S:=(H+c)^{1/2},\ \text{ where }\ c:=-\inf H+1,\end{align} and $(H +c)^{s}, s\in \R,$ is defined  by the spectral theory. Since $\g$ is $S$-bounded, we have $\|\g R(z)\|\ls \|SR(z)\|\ls \sup_{s\ge 1}(s^{1/2}|s-c-z|^{-1})$, which gives
\begin{align} \label{C1-est}\|C_1R(z)\|\ls \lan\re z\ran^{1/2}/|\im z|.\end{align} Using this and $|\n\x|\le 1$ in  \eqref{Bk} and using \eqref{tildef-est} for $g$ and the fact that $\widetilde g(z)$ supported in a complex neighbourhood of $\supp g$, which is a compact set, shows that $H^j B_k, k =1, 2, \dots$, $j=0,1$, are bounded.  

The proof of the last estimate in \eqref{Rem-est} is similar (cf. (B.8) and also (B.15) of \cite{HunSig1}). 
\end{proof}

Now, we prove various commutator estimates used in the main text. 
\begin{lemma}\label{lem:g-W-comm} \begin{align} \label{g-W-comm}[g(H), W_r]=O(r^{-\mu-1}).\end{align}\end{lemma}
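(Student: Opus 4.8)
The plan is to prove Lemma~\ref{lem:g-W-comm} by the Helffer--Sj\"ostrand representation of $g(H)$ together with the resolvent identity, reducing the commutator $[g(H),W_r]$ to an expression involving $[H,W_r]$ sandwiched between resolvents, and then using the hypothesis \eqref{Wt-cond} on the derivatives of $W_r$. First I would write, via \eqref{gH-repr},
\begin{align*}
[g(H),W_r]=\int d\widetilde g(z)\,[R(z),W_r]=-\int d\widetilde g(z)\,R(z)[H,W_r]R(z),
\end{align*}
using $[R(z),W_r]=R(z)[W_r-z,H]R(z)=-R(z)[H,W_r]R(z)$, so the whole problem is localized in the single commutator $[H,W_r]$.

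Next I would compute $[H,W_r]$ explicitly. Since $H=-\tfrac12\Delta+V$ and $V$ commutes with the multiplication operator $W_r$, we have $[H,W_r]=-\tfrac12[\Delta,W_r]=-\tfrac12\big((\Delta W_r)+2(\nabla W_r)\cdot\nabla\big)$. By \eqref{Wt-cond}, $\|\Delta W_r\|_{L^\infty}=O(r^{-\mu-2})$ and $\|\nabla W_r\|_{L^\infty}=O(r^{-\mu-1})$, so
\begin{align*}
\|[H,W_r](S^2+1)^{-1/2}\|=O(r^{-\mu-1}),
\end{align*}
because $\nabla(H+c)^{-1/2}=\nabla S^{-1}$ is bounded (as $p=-i\nabla$ is $S$-bounded, by \eqref{V-cond} and the definition \eqref{S-def}). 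More conveniently, write $[H,W_r]=T_r S$ with $T_r:=[H,W_r]S^{-1}$ a bounded operator satisfying $\|T_r\|=O(r^{-\mu-1})$.

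Then I would plug this back into the contour integral: $[g(H),W_r]=-\int d\widetilde g(z)\,R(z)T_r S R(z)$. Here $\|R(z)\|\le|\im z|^{-1}$, and $\|S R(z)\|\ls\sup_{\lambda\ge 1}\lambda^{1/2}|\lambda-c-z|^{-1}\ls\lan\re z\ran^{1/2}/|\im z|$, exactly as in the derivation of \eqref{C1-est}. Since $\widetilde g(z)$ is supported in a bounded complex neighbourhood of the compact set $\supp g$, the factor $\lan\re z\ran^{1/2}$ is bounded there, and $\int|d\widetilde g(z)|\,|\im z|^{-2}<\infty$ by the almost-analyticity estimate \eqref{tildef-est} with $p=1$. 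Pulling $\|T_r\|$ out of the integral gives
\begin{align*}
\|[g(H),W_r]\|\ls\|T_r\|\int|d\widetilde g(z)|\,\frac{\lan\re z\ran^{1/2}}{|\im z|^2}\ls r^{-\mu-1},
\end{align*}
which is \eqref{g-W-comm}. I do not expect a serious obstacle here; the only point requiring a little care is the domain/self-adjointness bookkeeping in the manipulation $[R(z),W_r]=-R(z)[H,W_r]R(z)$, i.e. checking that $W_r$ maps $D(\Delta)$ into a space on which these identities are justified — this follows because $W_r$ and its first two derivatives are bounded by \eqref{Wt-cond}, so $W_r$ is $H$-bounded and $[H,W_r]$ is a first-order operator controlled by $S$ as above.
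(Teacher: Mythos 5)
Your proof is correct and follows essentially the same route as the paper's: the Helffer--Sj\"ostrand representation combined with the resolvent identity to reduce to $[H,W_r]=-\tfrac12(\Delta W_r)-(\nabla W_r)\cdot\nabla$, which is $S$-bounded with norm $O(r^{-\mu-1})$ by \eqref{Wt-cond}, followed by the resolvent estimate $\|SR(z)\|\ls\lan\re z\ran^{1/2}/|\im z|$ and the almost-analyticity bound \eqref{tildef-est}. No discrepancies with the paper's argument.
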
 \begin{proof} As in \eqref{comm-exp1}, with $n=1$, we have
 \begin{align}
\label{comm-exp1'}[W_r, g(H)]&=\int d\widetilde g(z)(z-H)^{-1}[W_r, H](z-H)^{-1}. 
     \end{align} 
Using this, estimate \eqref{tildef-est} for $\widetilde g(z)$ and the fact that $[W_r, H]=\n W_r\cdot\n+\frac12
\Delta W_r$ is $S$-bounded, where the operator $S$ is given in \eqref{S-def}, and the estimate \[\|pR(z)\|\ls |\re z|^{1/2}/|\im z|\] (cf. \eqref{C1-est}), we arrive at \eqref{g-W-comm}. 
\end{proof}

\begin{lemma}\label{lem:chi-g+-est} 
Estimate \eqref{chi-g+-est}, that is, $\|\chi(x_{0s})g_{s^\beta}(H)\chi^-_b\| =O(s^{\beta-1})$, holds.
\end{lemma}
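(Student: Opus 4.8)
Set $\tau := s^{\beta}$, so that $g_{s^{\beta}}(H) = U_{\tau}^{-1} g(H) U_{\tau}$ and \eqref{chi-g+-est} is the statement that $\bigl\| \chi(x_{0s})\, U_{\tau}^{-1} g(H) U_{\tau}\, \chi^-_b \bigr\| \ls s^{\beta-1}$, uniformly in $s \ge 1$ (and in $a$). The plan is to exploit the scale separation of the three cut-offs. Recall $\chi = \sqrt{f}$ with $f \in \cF$, so $\supp \chi = \supp f \subset [\delta, \infty)$ for some $\delta > 0$; equivalently $\supp \chi(x_{0s}) \subset \{\x \ge a + \delta s\}$, exactly as used in Section \ref{sec:max-vel1-pf}. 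Since $\supp \chi^-_b \subset \{|x| \le b\}$ and $b < a$, for $s$ above an $a$-independent threshold these two supports are disjoint; in particular $\chi(x_{0s})\chi^-_b = 0$. I would then interpose a smooth cut-off $\eta := \theta(x_{0s})$, with $\theta \in C^{\infty}(\R)$ non-decreasing, $0 \le \theta \le 1$, $\theta \equiv 1$ on $[\delta, \infty)$ and $\supp \theta \subset [\delta/2, \infty)$ (so $\theta' \in C^{\infty}_0(\R)$). Then $\eta \equiv 1$ on $\supp \chi(x_{0s})$, hence $\chi(x_{0s})\eta = \chi(x_{0s})$, while $\eta\,\chi^-_b = 0$ for $s$ large; and, since $\eta$ varies on the spatial scale $s$ and commutes with the multiplication operator $W_{t}$, one has $[H_{t},\eta] = [H,\eta] = s^{-1} M_{s}$, where $M_{s}$ is a first-order differential operator with coefficients (and their derivatives) bounded uniformly in $s$; in particular $M_{s}^{*} = -M_{s}$ and $M_{s}$ is uniformly $S$-bounded with $S := (H+c)^{1/2}$ as in \eqref{S-def}. (Since $\theta' \in C^{\infty}_0$, Lemma \ref{lem:commut-exp} also applies to $\eta$ and gives $\| [\eta, g(H)] \| = O(s^{-1})$.)

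The core of the argument is to insert $\eta$ on the left through $\chi(x_{0s}) = \chi(x_{0s})\eta$ and commute it rightward past $U_{\tau}^{-1}$, $g(H)$ and $U_{\tau}$ — where, as $\eta\,\chi^-_b = 0$, it is absorbed via $\eta\, U_{\tau}\,\chi^-_b = [\eta, U_{\tau}]\chi^-_b$ — producing the exact identity
\begin{align*}
\chi(x_{0s})\, U_{\tau}^{-1} g(H) U_{\tau}\, \chi^-_b
&= \underbrace{\chi(x_{0s})\, [\eta, U_{\tau}^{-1}]\, g(H) U_{\tau}\, \chi^-_b}_{(\mathrm{I})}
\\
&\quad + \underbrace{\chi(x_{0s})\, U_{\tau}^{-1}\, [\eta, g(H)]\, U_{\tau}\, \chi^-_b}_{(\mathrm{II})}
+ \underbrace{\chi(x_{0s})\, U_{\tau}^{-1} g(H)\, [\eta, U_{\tau}]\, \chi^-_b}_{(\mathrm{III})}.
\end{align*}
Here $(\mathrm{II})$ is at once $O(s^{-1}) = O(s^{\beta-1})$. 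For $(\mathrm{I})$ and $(\mathrm{III})$ I would apply the fundamental theorem of calculus to $r \mapsto U_{r}^{-1}\eta\, U_{r}$ (whose derivative is $i U_{r}^{-1}[H_{r},\eta] U_{r} = i s^{-1} U_{r}^{-1} M_{s} U_{r}$), which gives $[\eta, U_{\tau}^{-1}] = -i s^{-1} \int_{0}^{\tau} U_{r}^{-1} M_{s} U_{r}\, dr\, U_{\tau}^{-1}$ and $[\eta, U_{\tau}] = i s^{-1} U_{\tau} \int_{0}^{\tau} U_{r}^{-1} M_{s} U_{r}\, dr$. Writing $U_{r}U_{\tau}^{-1} = U(r,\tau)$ for the two-parameter propagator, this yields
\[
(\mathrm{I}) = -i s^{-1} \int_{0}^{\tau} \chi(x_{0s})\, U_{r}^{-1}\, \bigl[ M_{s}\, U(r,\tau)\, g(H) \bigr]\, U_{\tau}\, \chi^-_b\; dr ,
\]
and, taking adjoints (the $U$'s being unitary, $\chi(x_{0s}), \chi^-_b, g(H)$ self-adjoint, $M_{s}^{*} = -M_{s}$),
\[
(\mathrm{III})^{*} = i s^{-1} \int_{0}^{\tau} \chi^-_b\, U_{r}^{-1}\, \bigl[ M_{s}\, U(r,\tau)\, g(H) \bigr]\, U_{\tau}\, \chi(x_{0s})\; dr .
\]
In both, the bracketed block $M_{s}\, U(r,\tau)\, g(H)$ is bounded uniformly in $s, r, \tau$ (see below), while $\chi(x_{0s}), \chi^-_b$ are contractions and $U_{r}^{-1}, U_{\tau}$ unitary; hence $\| (\mathrm{I}) \|, \| (\mathrm{III}) \| \ls s^{-1}\tau = s^{\beta-1}$. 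Adding the three contributions gives \eqref{chi-g+-est}; for $s$ below the threshold it holds trivially, the left side being $\le \| g \|_{\infty}$.

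The one remaining input is $\| M_{s}\, U(r,\tau)\, g(H)\| \ls 1$, uniformly in $s \ge 1$ and $0 \le r \le \tau$. As $M_{s}$ is first order with uniformly bounded coefficients, $\| M_{s} u \| \ls \| p u \| + \| u \| \ls \| S u \|$; and $\| S\, U(r,\tau)\, \psi \| \ls \| S \psi \| + \| \psi \|$ is the standard energy estimate for the time-dependent evolution: the $r$-derivative of the instantaneous energy $\langle U(r,\tau)\psi, (H_{r}+C) U(r,\tau)\psi \rangle$ is $\langle U(r,\tau)\psi, (\p_{r} W_{r}) U(r,\tau)\psi \rangle$ (the flow terms cancelling since $H_r$ and $H_r+C$ commute), and this is $r$-integrable by \eqref{Wt-cond} (only $\mu > 0$, $|\al| = 0$ being used here), so Gronwall keeps the quantity comparable to its value at $r = \tau$; applying this with $\psi = g(H)\phi$ and using the boundedness of $(H+c)^{1/2} g(H)$ ($g$ being compactly supported) gives the claim — alternatively one may invoke the bounds on $p\, U_{t}\, g(H)$ already used for \eqref{local-est3-Ht}. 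I expect the genuine difficulty to be purely organizational: the commutators must be arranged so that the unbounded first-order operator $M_{s}$ always abuts the energy cut-off $g(H)$, which tames it, and never the bare characteristic function $\chi^-_b \notin H^{1}$ — this is precisely why $(\mathrm{III})$ is estimated through its adjoint rather than directly.
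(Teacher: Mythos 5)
Your argument is essentially the paper's: the same three-term commutator decomposition (the interposed cut-off $\eta$ is harmless but unnecessary, since $\chi(x_{0s})\chi^-_b=0$ already holds and $\chi$ is smooth enough to be commuted directly), the same Duhamel formula for $[\,\cdot\,,U_\tau]$ and $[\,\cdot\,,U_\tau^{-1}]$, and the same mechanism of dominating the first-order commutator $[H_r,\cdot]=O(s^{-1})\cdot S$ by $S=(H+c)^{1/2}$ and transporting $S$ through the evolution until it abuts $g(H)$. One caution: the Gronwall form of the energy estimate $\|S\,U(r,\tau)\psi\|\ls\|S\psi\|+\|\psi\|$ differentiates $W_r$ in time, which hypothesis \eqref{Wt-cond} does not control, so you should use your stated alternative --- the commutator bound $[S,U_r]=O(1)$ obtained from $[W_r,S]=O(r^{-\mu-1})$, which is exactly the paper's \eqref{p-U-comm-est}.
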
 

\begin{proof}
Recall the definition $g_{s^\beta}(H):=U_{s^\beta}^{-1}g(H) U_{s^\beta}$ 
and let $\chi\equiv \chi(x_{0s})$. 
Using $\chi \chi^-_b=0$, we write 
\begin{align}\label{chi-g+-exp}
\chi g_{s^\beta}(H)\chi^-_b
 = [\chi, U_{s^\beta}^{-1}]g(H) U_{s^\beta}\chi^-_b+U_{s^\beta}^{-1}& [\chi, g(H)] U_{s^\beta}\chi^-_b\notag
 \\
& + U_{s^\beta}^{-1}g(H)[\chi,  U_{s^\beta}]\chi^-_b.
\end{align}
Since $[\chi,  U_{s^\beta}]= U_{s^\beta}(U_{s^\beta}^{-1} \chi U_{s^\beta}-\chi)
= U_{s^\beta}\int_0^{s^\beta}\p_r(U_r^{-1} \chi U_r)dr$ and $\p_r(U_r^{-1} \chi U_r)=i U_r^{-1} [H_r, \chi] U_r$ ,
we have
\begin{align}\label{chi-U-comm-exp'}
[\chi, U_{s^\beta}]=i U_{s^\beta}\int_0^{s^\beta} U_r^{-1} [H_r, \chi] U_rdr.
\end{align}
Note that $[H_r, \chi]=-i p\n\chi + \frac12(\Delta\chi)$. We control $p$ by $S^{-1}=(H+c)^{-1/2}$ (see \eqref{S-def}) as 
 \begin{align}\label{p-rep}p=S B=B' S,\end{align}  
 where $B:=(H+c)^{-1/2}p$ and $B':=p(H+c)^{-1/2}$, bounded operators.  Eq. \eqref{chi-U-comm-exp'}, together with the last two relations, gives
\begin{align}\label{chi-U-comm-exp''}
 [\chi, U_{s^\beta}]&= U_{s^\beta} \int_0^{s^\beta} U_r^{-1} \big(S B\n\chi
   + i\frac12(\Delta\chi) \big) U_r dr.
\end{align}
Next, we commute $(H+c)^{1/2}$ to the left. To this end, we apply the equation
\begin{align}\label{p-U-comm-est}
[S, U_{r}]=O(1),
\end{align}
which we now prove. 
First, we write $S=(H+c)^{1/2}=(H+c)(H+c)^{-1/2}$ and use the explicit formula 
$(H +c)^{-s}:=c'  \int_0^\infty (H +c+\om)^{-1} d\om/\om^s, $ 
 where $s\in (0, 1)$ and $c' :=[\int_0^\infty (1+\om)^{-1} d\om/\om^s]^{-1}$, to obtain $ [W_{r}, (H+c)^{1/2}]=O(r^{-\mu-1})$. This implies the estimate $ [H_{r}, (H+c)^{1/2}]= [W_{r}, (H+c)^{1/2}]=O(r^{-\mu-1})$, which, together with $\mu>0$  
and the relation 
\begin{align}\label{p-U-comm-est'}
[S, U_{r}]=U_{r}\int_0^{r}i U_{r'}^{-1} [H_{r'}, S] U_{r'}dr',
\end{align} 
  yields \eqref{p-U-comm-est}.

Commuting $S$ in Eq. \eqref{chi-U-comm-exp''} to the left (either twice through $U_r^{-1}$ and $U_{s^\beta}$, or once through $U_{s^\beta}U_r^{-1}=U(s^\beta, r)$) and using  \eqref{p-U-comm-est}, $\n\chi=O(s^{-1})$ and $\Delta\chi=O(s^{-2})$, gives
\begin{align}
\notag [\chi, U_{s^\beta}] \notag &=S O(s^{\beta-1})+  \int_0^{s^\beta}  \big( O(s^{-1})  
   + O(s^{-2}) \big) U_r dr\\   
\label{chi-U-comm-exp} 
& =S O(s^{\beta-1}) + O(s^{\beta-1}).
\end{align}

A similar estimate holds for $[\chi, U_{s^\beta}^{-1}]= - [\chi, U_{s^\beta}]^*$:
\begin{align}
\label{chi-Uinv-comm-est}
[\chi, U_{s^\beta}^{-1}]=O(s^{\beta-1})S+O(s^{\beta-1}).
\end{align}
 
Now, the second term on the r.h.s. of the above relation produces the right bound, $O(s^{-1+\beta})$ 
and so does the first term multiplied by $g(H)$, as $(H+1)^{1/2} g(H)$ is a bounded operator. 
This shows that the first term on the r.h.s. of \eqref{chi-g+-exp} is of the order $O(s^{-1+\beta})$. 
The same estimates apply to  the third term on the r.h.s. of \eqref{chi-g+-exp} giving $O(s^{-1+\beta})$. 
For the second term on the r.h.s. of \eqref{chi-g+-exp}, we use  \eqref{comm-exp-x} to obtain $[\chi, g(H)] =O(s^{-1})$.
This proves  \eqref{chi-g+-est}.  
\end{proof}

\end{document}